\newtheorem{lemma}{Lemma}
\newcommand{\be}{\begin{equation}}
\newcommand{\ee}{\end{equation}}
\newcommand{\bal}{\begin{aligned}}
\newcommand{\eal}{\end{aligned}}	
\newcommand{\ba}{\begin{array}}
\newcommand{\ea}{\end{array}}
\newcommand{\bea}{\begin{eqnarray}}
\newcommand{\eea}{\end{eqnarray}}
\newcommand{\vbar}{\raisebox{.17ex}{\rule{.04em}{1.35ex}}}
\newcommand{\vbarind}{\raisebox{.01ex}{\rule{.04em}{1.1ex}}}
\newcommand{\R}{\ifmmode {\rm I}\hspace{-.2em}{\rm R} \else ${\rm I}\hspace{-.2em}{\rm R}$ \fi}
\newcommand{\T}{\ifmmode {\rm I}\hspace{-.2em}{\rm T} \else ${\rm I}\hspace{-.2em}{\rm T}$ \fi}
\newcommand{\N}{\ifmmode {\rm I}\hspace{-.2em}{\rm N} \else \mbox{${\rm I}\hspace{-.2em}{\rm N}$} \fi}
\newcommand{\B}{\ifmmode {\rm I}\hspace{-.2em}{\rm B} \else \mbox{${\rm I}\hspace{-.2em}{\rm B}$} \fi}
\newcommand{\Hil}{\ifmmode {\rm I}\hspace{-.2em}{\rm H} \else \mbox{${\rm I}\hspace{-.2em}{\rm H}$} \fi}
\newcommand{\C}{\ifmmode \hspace{.2em}\vbar\hspace{-.31em}{\rm C} \else \mbox{$\hspace{.2em}\vbar\hspace{-.31em}{\rm C}$} \fi}
\newcommand{\Cind}{\ifmmode \hspace{.2em}\vbarind\hspace{-.25em}{\rm C} \else \mbox{$\hspace{.2em}\vbarind\hspace{-.25em}{\rm C}$} \fi}
\newcommand{\Q}{\ifmmode \hspace{.2em}\vbar\hspace{-.31em}{\rm Q} \else \mbox{$\hspace{.2em}\vbar\hspace{-.31em}{\rm Q}$} \fi}
\newcommand{\Z}{\ifmmode {\rm Z}\hspace{-.28em}{\rm Z} \else ${\rm Z}\hspace{-.28em}{\rm Z}$ \fi}
\newcommand{\CN}{\mathcal{CN}}
\begin{document}

\title{Enhanced Fairness and Scalability of Power Control Schemes in Multi-Cell Massive MIMO}

\author{Amin Ghazanfari,~\IEEEmembership{Student Member,~IEEE,} Hei Victor Cheng, Emil Bj{\"o}rnson,~\IEEEmembership{Senior Member,~IEEE} and Erik G. Larsson,~\IEEEmembership{Fellow,~IEEE}\\
	\thanks{A preliminary version of this work was presented at the IEEE International Conference on Acoustics, Speech and Signal Processing (ICASSP) 2019 \cite{amin2019icassp}. This paper was supported by ELLIIT and by the European Union's Horizon 2020 research and innovation programme under grant agreement No 641985 (5Gwireless).}
	\thanks{Amin Ghazanfari, Emil Bj\"{o}rnson and Erik G. Larsson are with the Department of Electrical Engineering (ISY), Link\"{o}ping University, 581~83 Link\"{o}ping, Sweden (email:\{amin.ghazanfari, emil.bjornson, erik.g.larsson\}@liu.se).}
	\thanks{H. V. Cheng is with The Edward S. Rogers Sr. Department of Electrical
		and Computer Engineering, University of Toronto, Toronto ON M5S 3G4,
		Canada (email:hvc@ieee.org).}
}

\maketitle
%---------------------Abstract--------------------------
\begin{abstract}
	This paper studies the transmit power optimization in multi-cell massive multiple-input multiple-output (MIMO) systems. Network-wide max-min fairness (NW-MMF) and network-wide proportional fairness (NW-PF) are two well-known power control schemes in the literature. The NW-MMF focus on maximizing the fairness among users at the cost of penalizing users with good channel conditions. On the other hand, the NW-PF focuses on maximizing the sum SE, thereby ignoring fairness, but gives some extra attention to the weakest users. However, both of these schemes suffer from a scalability issue which means that for large networks, it is highly probable that one user has a very poor channel condition, pushing the spectral efficiency (SE) of all users towards zero.	
	To overcome the scalability issue of NW-MMF and NW-PF, we propose a novel power control scheme that is provably scalable. This scheme maximizes the geometric mean (GM) of the per-cell max-min SE. To solve this new optimization problem, we prove that it can be rewritten in a convex optimization form and then solved using standard tools. The simulation results highlight the benefits of our model which is balancing between NW-PF and NW-MMF.  
\end{abstract}

%---------------------KeyWords------------------------------
\begin{keywords}
	Power control, Massive MIMO, Fairness, Optimization.
\end{keywords}

%---------------------Introduction--------------------------
\section{Introduction}
\label{sec:intro}
Massive multiple-input multiple-output (MIMO) \cite{marzetta2010noncooperative} is a key technology in 5G \cite{Parkvall2017a,WWB5G}. It refers to a system in which the cellular base stations (BSs) are equipped with very many antennas. Massive MIMO supports spatial multiplexing of many users, coherent beamforming, and spatial interference mitigation. It enhances the spectral and energy efficiency compared with conventional MIMO setups. Unlike conventional cellular systems with one or a few antennas per BS, the power control in massive MIMO systems benefits from channel hardening, namely that the small-scale fading average out when having many antennas per BS \cite{bjornson2017massive}. It means that in massive MIMO, one can optimize the transmission power based on only the large-scale fading coefficients and spatial correlation, instead of optimizing with respect to the small-scale fading coefficients, which changes rapidly and would require very rapid power control updates. Power control schemes with different utility functions have been considered in the massive MIMO literature \cite{bjornson2017massive,guo2014uplink,kammoun2014low,zhao2013energy,wu2016asymptotically,guo2016security,zhang2015power,liu2017pilot1,assaad2018power,baracca2018downlink,hao2017power}. In particular, max-min fairness (MMF) is a classical utility function that has been studied for different setups in \cite{van2016joint,cheng2017optimal,van2017joint,yang2017massive,xiang2014massive,zarei2017max,akbar2018downlink}. It provides the same quality of service at all user locations, which is a highly desirable feature in future systems.

Unfortunately, applying a network-wide MMF (NW-MMF) utility to a multi-cell massive MIMO network leads to a scalability issue since the performance is limited by the weakest user. When increasing the number of cells and active users in the network, the probability of having a user with an extremely poor channel gets higher due to shadow fading. Therefore, NW-MMF optimization leads to the situation in which all users in the network suffer from the weak channel of the worst user. In other words, the larger the network, the lower the per-user spectral efficiency (SE) becomes when optimizing for NW-MMF. Increasing the number of cells to infinity will eventually result in zero SE for all users in the network. This is a major problem that was pointed out in the textbooks \cite{redbook,bjornson2017massive}, but seldom discussed in scientific papers where the simulation setups are often too small to observe overly small SEs when considering the NW-MMF utility. In conclusion, the NW-MMF schemes proposed in the literature are unsuitable for providing fairness in practical cellular networks. The network-wide proportional fairness (NW-PF) is another well-known utility function for power control in multi-cell massive MIMO \cite{bjornson2017massive}. It has the benefit of balancing between sum SE optimization and MMF. In NW-PF, the optimization objective is defined as the product of SINRs of all users in the network. Therefore, this scheme suffers from the same scalability issue as NW-MMF.

\subsection{Related works and contributions}
In this paper, we propose a rigorous optimization framework that provides network-wide power control for multi-cell Massive MIMO in which per-cell max-min fairness is guaranteed. A heuristic approach to resolve the scalability issue of NW-MMF was considered in \cite[Ch.~6]{redbook}. The idea is to maximize the minimum SE within each cell and let the cells have distinct SEs. This is done by neglecting the coherent interference (i.e. the interference from contaminating cells that are sharing the same pilot sequence with the desired cell) and allowing all the cells to utilize their full powers and applying MMF within the cells and then compensate to the first order of approximation for the effects of coherent interference. Hence, the weak users have a lower impact on the whole network performance and mostly affect their own cells. 
The proposed algorithm in \cite{redbook} is computationally efficient, but relies on approximations and there is no guarantee of optimality.
Inspired by this algorithm, we are proposing a new utility function that can be optimized rigorously: maximization of the geometric mean (GM) of the max-min SEs in each of the cells. We also generalize the heuristic algorithm from \cite{redbook} to handle correlated fading channels. The NW-PF utility was considered in \cite{bjornson2017massive} to balance between sum SE optimization and fairness. In simulations, it outperforms NW-MMF in terms of SE for most users, but it gives no fairness guarantees except for giving non-zero SE to every user if we have non-zero channels for all users. Recall that, NW-PF also gives nothing but zero SE for all users in case one of the users suffers from deep fading with a zero-valued channel.

The main contributions of this paper are:
\begin{itemize}	
	\item We formulate the novel GM per-cell MMF power control problem, which is proved to be scalable in terms of performance. We then reformulate the problem to reach a convex formulation that can be solved to global optimality in an efficient way. 
	The new scheme outperforms the heuristic scheme in \cite{redbook} in terms of per-user SE and sum SE of the network in some cases and gives a comparable performance in other cases.
	\item We provide a rigorous mathematical proof of that the NW-MMF and NW-PF power control schemes are not scalable for multi-cell Massive MIMO systems.
	\item To further investigate the benefits of the proposed power control scheme, we define and solve two more power control schemes for the problem at hand: NW-MMF and NW-PF. The numerical results show that the proposed power control scheme combines the benefits of NW-MMF and NW-PF without suffering from the scalability issue of NW-MMF and NW-PF.
	\item We consider correlated Rayleigh fading and also explain how the framework can be used with other channel models. We extended the heuristic scheme proposed in \cite{redbook} for correlated Rayleigh fading channels. This scheme provides an approximate solution of the GM per-cell MMF power control scheme. 
	\item We investigate the effect of different pilot reuse factors on the performance of our proposed power control scheme.
\end{itemize}

To illustrate the scalability issue of NW-MMF and NW-PF, we provide the following example. We consider a simulation setup consisting of $16$ cells and $2$ users per cell that are communicating in the uplink (UL) with correlated Rayleigh fading channels (for other simulation parameters please refer to Section \ref{result}). For one of the users in the network (user A), the large-scale fading coefficient is manually fixed to some arbitrary values indicated in ${\rm dB}\,$ scale in Fig. \ref{fig:BetaVary}. The figure shows the sum SE achieved with NW-MMF, NW-PF, and our proposed GM per-cell MMF power control problems.  We can see that when the large-scale fading coefficient value for user A is very low, the sum SE for both NW-MMF and NW-PF is zero (given the numerical precision in our computation) while our approach still achieves a high sum SE.
	
\begin{figure}[htb!]
	\centering
	\includegraphics[width=.8\columnwidth]{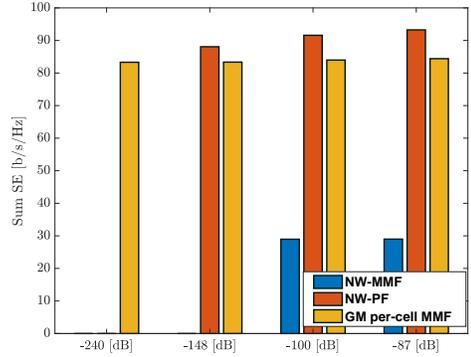}
	\caption{Sum SE for UL data transmission with different large-scale fading values for user A.}
	\label{fig:BetaVary}
\end{figure}

Note that in \cite{amin2019icassp}, which is the conference version of the current paper, we proposed the novel GM per-cell MMF power control for the case of uncorrelated Rayleigh fading channel and assumed that the pilots are reused in every cell. In this paper, we consider correlated Rayleigh fading and arbitrary pilot reuse sets, and we also demonstrate that the optimization framework can be used in many other scenarios.

\textbf{Notation:} We use boldface lower case to indicate column vectors $\mathbf{x}$, and boldface upper case is used for matrices, $\mathbf{X}$. An identity matrix with size $M$ is denoted as $\mathbf{I}_M$. The conjugate transpose of $\mathbf{X}$ is denoted as $\mathbf{X}^{{\rm H}}$.  In addition, the operator $\mathbb{E}\{\boldsymbol{\cdot}\}$ denotes the expectation of a random variable. The notation $\|\mathbf{x}\|$ stands for the L$_2$-norm of the vector $\mathbf{x}$ and $\mathrm{tr}\left(\mathbf{A}\right)$ is the trace
of a square matrix $\mathbf{A}$. The Hadamard product of $\mathbf{A}$ and $\mathbf{B}$ is denoted by $\mathbf{A}\odot \mathbf{B}$. The notation $ \CN({\mathbf{0}},{\mathbf{R}})$ is used to show the circularly symmetric complex Gaussian distribution with zero mean and correlation matrix ${\mathbf{R}}$.

%---------------------System Model--------------------------
\section{System Model} 
\label{SystemModel}
In this paper, we consider a multi-cell massive MIMO setup that consists of $L$ cells, each associated with one BS. Each BS is equipped with $M$ antennas and is serving $K$ single-antenna users. In the considered setup, the channel response between BS $l$ and user $k$ in cell $l'$ is defined as ${\mathbf{h}}^{l}_{l'k}\sim \CN({\mathbf{0}},{\mathbf{R}}^{l}_{l'k})$, where ${\mathbf{R}}^{l}_{l'k} \in \mathbb{C}^{M\times M}$ is the positive semi-definite spatial correlation matrix of the channel. We define $\beta^{l}_{l'k}  = \frac{\mathrm{tr}\left(\mathbf{R}^{l}_{l'k}\right)}{M}$, where $\beta^{l}_{l'k} \geq 0$ is the corresponding average large-scale fading coefficient among the antennas. We use conventional block fading to model the randomness of the wireless channels over time and frequency \cite{redbook}. The coherence block of a channel is defined as the time-frequency block in which the channel is constant. The channels change independently from one block to another according to a stationary ergodic random process. The number of samples per coherence block is given by $\tau_c = T_{c} B_{c}$, where $T_{c}$ is the coherence time and $B_{c}$ is the coherence bandwidth \cite[Ch.~2]{redbook},\cite[Ch.~2]{bjornson2017massive}. Therefore, it is assumed that channel estimation is carried out at each BS once per coherence block. Each user transmits a pilot sequence from a predefined set of mutually orthogonal pilots.  It is assumed that $\tau_p$ samples (with $\tau_p\leq \tau_c$) are dedicated for pilot transmission and the remaining samples will be utilized for UL and downlink (DL) data transmission. Since the complexity of MMSE estimation is prohibitive \cite[Ch.~3]{bjornson2017massive}, we assume that the BSs apply the element-wise MMSE (EW-MMSE) channel estimator and the channel estimate of the channel response between BS $l$ and user $k$ in cell $l'$ is \cite{ozdogan2018massive}
%=============================eq(1) 
\begin{equation}
\begin{aligned}
\hat{\mathbf{h}}^{l}_{l'k} \sim \CN({\mathbf{0}},{\mathbf{\Sigma}}^{l}_{l'k}),
\end{aligned}
\end{equation}
%=============================
where ${\mathbf{\Sigma}}^{l}_{l'k} = p_{\rm ul} \tau_p {\mathbf{D}}^{l}_{l'k} {\mathbf{\Lambda}}^{l}_{l'k} \left({\mathbf{\Psi}}^{l}_{l'k}\right)^{-1}{\mathbf{\Lambda}}^{l}_{l'k} {\mathbf{D}}^{l}_{l'k}$. In addition, ${\mathbf{\Psi}}^{l}_{l'k}$, ${\mathbf{D}}^{l}_{l'k}$ and ${\mathbf{\Lambda}}^{l}_{l'k}$ are defined as
%=============================eq(2) 
\begin{equation}
\begin{aligned}
{\mathbf{\Psi}}^{l}_{l'k} = \left(\sum_{j\in \mathcal{P}_{l'}} p_{\rm ul} \tau_{p} {\mathbf{R}}^{l}_{jk} + {\mathbf{I}}_{M}\right)^{-1},
\end{aligned}
\end{equation}
%=============================

%=============================eq(3)
\begin{equation}
\begin{aligned}
\mathbf{D}^{l}_{l'k} = \mathbf{R}^{l}_{l'k} \odot \mathbf{I}_{M},
\end{aligned}
\end{equation}
%=============================
%=============================eq(4)
\begin{equation}
\begin{aligned}
\mathbf{\Lambda}^{l}_{l'k} =\left(\left[\sum_{j\in \mathcal{P}_{l'}} p_{\rm ul}\tau_p \mathbf{R}^{l}_{jk} + \mathbf{I}_{M} \right] \odot \mathbf{I}_{M}\right)^{-1},
\end{aligned}
\end{equation}
%==============================
where the set $\mathcal{P}_{l'}  \subset \{j: j = 1,\dots,L \}$ of all cells in the network where the users have the same set of pilots as in cell $l'$. If two cells use the same set of pilot sequences, then user $k$ in these cells use same pilot sequence.
In addition, $\rho_{\rm ul}$ is the normalized UL transmit power and we have independent additive noise at the BSs with i.i.d.~elements distributed by $\CN({{0}},1).$
Note that each BS only needs to know the diagonals of the spatial correlation matrices to apply the EW-MMSE estimator. The diagonals are relatively easy to estimate while the full correlation matrices that are needed for the MMSE estimator are complicated to acquire \cite[Ch.~3]{bjornson2017massive}. 
We further assume that each BS performs maximum ratio (MR) processing during the data transmission phase. This assumption is made to obtain closed-form SE expressions for both UL and DL data transmission \cite{bjornson2017massive}. However, our method can be generalized to arbitrary linear processing techniques and channel models, which will be presented in Section \ref{otherChannelModels}. For the UL data transmission, the ergodic SE of user $k$ in cell $l$ is given by \cite[Th.~4.4]{bjornson2017massive}
%=============================eq(5)
\begin{equation} \label{eq:SE_ul}
\mathrm{SE}^{\rm ul}_{lk} = \left(1- \frac{\tau_p+\tau_d}{\tau_c}\right)\log_2\left(1+ \mathrm{SINR}^{\rm ul}_{lk}\left(\left\{\eta_{lk}\right\}\right)\right),
\end{equation}
%=============================
where $\tau_d$ is the number of samples used for DL data transmission.
In addition, $\eta_{lk} \in [0,1]$ is the power control coefficient of user $k$ in cell $l$ and these will be optimization variables in this paper. We let $\{\eta_{lk}\}$ denote the set of all power control coefficients. The UL effective SINR  for user $k$ in cell $l$ is denoted as  $\mathrm{SINR}^{\rm ul}_{lk}\left(\left\{\eta_{lk}\right\}\right)$ and provided in \eqref{eq:ul_sinr_corr} at the top of the next page \cite{ozdogan2018massive}.
\begin{figure*}[t]
%=============================eq(6)
\begin{equation}\label{eq:ul_sinr_corr}
\begin{aligned}
\mathrm{SINR}^{\rm ul}_{lk} \left(\left\{\eta_{lk}\right\}\right)=\frac{\eta_{lk}{\rho_{\rm ul}} \left(\rho_{\rm ul}\tau_{p} 
	\mathrm{tr}\left(\mathbf{D}^{l}_{lk}\mathbf{\Lambda}^{l}_{lk}\mathbf{D}^{l}_{lk}\right)\right)^2}{\sum\limits_{l' =1
	}^{L} \sum\limits_{k'=1}^{K} \eta_{l'k'}\rho_{\rm ul} \chi^{lk}_{l'k'} - \eta_{lk}\rho_{\rm ul}\left(\rho_{\rm ul}\tau_{p} \mathrm{tr}\left(\mathbf{D}^{l}_{lk} \mathbf{\Lambda}^{l}_{lk}\mathbf{D}^{l}_{lk}\right)\right)^2 + \mathrm{tr}\left(\mathbf{\Sigma}^{l}_{lk}\right)},
\end{aligned}
\end{equation}  
%=============================
where $\chi^{lk}_{l'k'}$ is defined as
%=============================eq(7)
\begin{equation}
\chi^{lk}_{l'k'} =  \mathrm{tr}\left(\mathbf{R}^{l}_{l'k'}\mathbf{\Sigma}^{l}_{lk}\right)+ \begin{cases}
\rho_{\rm ul}^2 \tau^{2}_{p} \left(\mathrm{tr}\left(\mathbf{D}^{l}_{l'k'}\mathbf{\Lambda}^{l}_{lk}\mathbf{D}^{l}_{lk}\right)\right)^2, & l' \in \mathcal{P}_{l} \textrm{ and } k' = k,\\
0, & \textrm{otherwise}.
\end{cases}
\end{equation}
\hrulefill
\end{figure*}
%=============================

Similar assumptions are made for the case of DL data transmission, in which we have MR processing at the BSs with correlated Rayleigh fading channel and assuming that each BS uses the EW-MMSE estimator. The ergodic SE of user $k$ in cell $l$ is
%=============================eq(8)
\begin{equation}\label{eq:SE_dl}
\mathrm{SE}^{\rm dl}_{lk} = \left(1- \frac{\tau_p+\tau_u}{\tau_c}\right)\log_2\left(1+ \mathrm{SINR}^{\rm dl}_{lk}\left(\left\{\eta_{lk}\right\}\right)\right),
\end{equation}
%=============================
where $\tau_u$ is the number of samples used for UL data transmission and $\mathrm{SINR}^{\rm dl}_{lk}$ is the effective SINR of user $k$ in cell $l$ and given in \eqref{eq:dl_sinr_corr} at the top of the next page \cite{ozdogan2018massive},
\begin{figure*}[t]

%=============================eq(9)
\begin{equation}\label{eq:dl_sinr_corr}
\begin{aligned}
\mathrm{SINR}^{\rm dl}_{lk}\left(\left\{\eta_{lk}\right\}\right)=\frac{\eta_{lk}{\rho_{\rm dl}} \left(\rho_{\rm ul}\tau_{p} 
	\mathrm{tr}\left(\mathbf{D}^{l}_{lk}\mathbf{\Lambda}^{l}_{lk}\mathbf{D}^{l}_{lk}\right)\right)^2}{\mathrm{tr}\left(\mathbf{\Sigma}^{l}_{lk}\right)\left(\sum\limits_{l' =1
	}^{L} \sum\limits_{k'=1}^{K} \eta_{l'k'}\rho_{\rm dl} \frac{\zeta^{lk}_{l'k'}}{\mathrm{tr}\left(\mathbf{\Sigma}^{l'}_{l'k'}\right)} - \eta_{lk}\rho_{\rm dl}\frac{\left(\rho_{\rm ul}\tau_{p} \mathrm{tr}\left(\mathbf{D}^{l}_{lk} \mathbf{\Lambda}^{l}_{lk}\mathbf{D}^{l}_{lk}\right)\right)^2}{\mathrm{tr}\left(\mathbf{\Sigma}^{l}_{lk}\right)} + 1\right)},
\end{aligned}
\end{equation} 
%============================= 
where $\zeta^{lk}_{l'k'}$ is defined as
%=============================eq(10)
\begin{equation}
\zeta^{lk}_{l'k'} =  \mathrm{tr}\left(\mathbf{R}^{l'}_{lk}\mathbf{\Sigma}^{l'}_{l'k'}\right)+ \begin{cases}
\rho_{\rm ul}^2 \tau^{2}_{p} \left(\mathrm{tr}\left(\mathbf{D}^{l'}_{lk}\mathbf{\Lambda}^{l'}_{l'k'}\mathbf{D}^{l'}_{l'k'}\right)\right)^2, & l' \in \mathcal{P}_{l} \textrm{ and } k' = k,\\
0, & \textrm{otherwise},
\end{cases}
\end{equation}
%=============================
\hrulefill
\end{figure*}
where $\rho_{\rm dl}$ in \eqref{eq:dl_sinr_corr} is DL normalized transmit powers and $\eta_{lk} \in [0,1]$ is the power control coefficient of user $k$ in cell $l$. Note that the power control coefficients generally take different values in the UL and DL.

Uncorrelated Rayleigh fading is a special case of this model with ${\mathbf{R}}^{l}_{l'k} = \beta^{l}_{l'k} {\mathbf I}_{M}$. Therefore, the channel response between BS $l$ and user $k$ in cell $l'$ becomes ${\mathbf h}^{l}_{l'k}\sim \CN({\mathbf{0}},\beta^{l}_{l'k} {\mathbf I}_{M})$. 
Since the EW-MMSE and MMSE estimators coincide in this case, the channel estimation phase follows the standard minimum mean square error (MMSE) estimation approach in the literature, e.g., \cite{redbook,bjornson2017massive,kay1993fundamentals} and the derivation is omitted here. Hence, the MMSE estimate of ${\mathbf h}^{l}_{l'k}$ is denoted as ${\hat{\mathbf h}}^{l}_{l'k}\sim \CN({\mathbf{0}},\gamma^{l}_{l'k} {\mathbf I}_{M})$, where $\gamma^{l}_{l'k}$ is the corresponding variance:
%=============================eq(11)
\begin{equation}
\begin{aligned}
\gamma^{l}_{l'k}=  \frac{{\tau_p p_{\rm ul}\left({\beta^{l}_{l'k}}\right)^2}}{{1+ \tau_p \rho_{\rm ul}\sum\limits_{l''\in \mathcal{P}_{l'}}\beta^{l}_{l''k}}}.
\end{aligned}
\end{equation}
%============================== 
Note that if two BSs are sharing pilots, user $k$ in the respective cells use identical pilots for $k=1,\ldots,K$.

By using the mentioned information for uncorrelated Rayleigh fading and the assumption of MR processing at each BS for the UL data transmission, the ergodic SE of user $k$ in cell $l$ is identical to \eqref{eq:SE_ul} with a different SINR expression. 
In this case, $\mathrm{SINR}^{\rm ul}_{lk}$ is given as
%=============================eq(12)
\begin{equation}\label{eq:ul_sinr}
\begin{aligned}
&\mathrm{SINR}^{\rm ul}_{lk} \left(\left\{\eta_{lk}\right\}\right)=\\
&\frac{M \rho_{\rm ul} \gamma^{l}_{lk}\eta_{lk}}{1+ \sum\limits_{l'=1}^{L} \sum\limits_{k'=1}^{K}\rho_{\rm ul} \beta^{l}_{l'k'}\eta_{l'k'}+\sum\limits_{l'\in \mathcal{P}_{l}\backslash \{l\}}{}M\rho_{\rm ul}\gamma^{l}_{l'k}\eta_{l'k}}.
\end{aligned}
\end{equation}
%=============================

For the DL data transmission with uncorrelated Rayleigh fading channels, the effective DL SINR for the case of MR processing at the BSs is 
\cite[Ch.~4]{redbook}
%============================eq(13)
\begin{equation}\label{eq:dl_sinr}
\begin{aligned}
&\mathrm{SINR}^{\rm dl}_{lk} \left(\left\{\eta_{lk}\right\}\right)=\\
&\frac{M \rho_{\rm dl} \gamma^{l}_{lk}\eta_{lk}}{1+ \sum\limits_{l'=1}^{L}\rho_{\rm dl}\beta^{l'}_{lk} \left(\sum\limits_{k'=1}^{K} \eta_{l'k'}\right)+\sum\limits_{l'\in \mathcal{P}_{l} \backslash \{l \}}^{}M\rho_{\rm dl}\gamma^{l'}_{lk}\eta_{l'k}},
\end{aligned}
\end{equation}
%=============================
which can be plugged into \eqref{eq:SE_dl} that provides the ergodic SE of user $k$ in cell $l$.

Note that, the SINR expressions for both cases of uncorrelated and correlated fading channel for both UL and DL data transmission with MR processing at the BSs can be written in a general form
%=============================eq(14)
\begin{equation}\label{eq:SINR_general}
\begin{aligned}
\mathrm{SINR}_{lk} \left(\left\{\eta_{lk}\right\}\right)= \frac{a_{lk}\eta_{lk}}{\sum\limits_{l'=1}^{L}\sum\limits_{k' = 1}^{K} b^{l'k'}_{lk} \eta_{l'k'} + \sum\limits_{l'\in \mathcal{P}_{l}\backslash \{ l \}} {c}^{l'}_{lk} \eta_{l'k} + d_{lk}},
\end{aligned}
\end{equation}
%=============================
where $a_{lk}$, $b^{l'k'}_{lk}$, $c^{l'}_{lk}$ and $d_{lk}$ can be defined based on the corresponding channel model assumption and data transmission direction. Table \ref{coeff} summarizes the corresponding definitions of the aforementioned parameters for both UL and DL data transmission for correlated and uncorrelated Rayleigh fading channels.

\begin{table}
\begin{center}
	\caption{Explicit definition of parameters.} 
	\begin{tabular}{ | m{1cm} | m{2cm}| m{4.2cm} | } 
		\hline
		 & Uncorrelated & Correlated \\ 
		\hline
		\hline
		Uplink &  $a_{lk} = M \rho_{\rm ul} \gamma^{l}_{lk}$ \newline $b^{l'k'}_{lk} = \rho_{\rm ul} \beta^{l}_{l'k'}$ \newline $c^{l'}_{lk} = M \rho_{\rm ul} \gamma^{l}_{l'k}$ \newline $d_{lk} = 1$& $a_{lk}={\rho^3_{\rm ul}} \tau_{p}^2 
		{\mathrm{tr}}\left(\mathbf{D}^{l}_{lk}\mathbf{\Lambda}^{l}_{lk}\mathbf{D}^{l}_{lk}\right)^2$ \newline $b^{l'k'}_{lk} = \rho_{\rm ul} \chi^{\rm ul}_{l',k'}$\newline $c^{l'}_{lk} =\rho_{\rm ul}^2 \tau^{2}_{p} \mathrm{tr}\left(\mathbf{D}^{l}_{l'k}\mathbf{\Lambda}^{l}_{lk}\mathbf{D}^{l}_{lk}\right)^2$ \newline $d_{lk} = {\mathrm{tr}}\left(\mathbf{\Sigma}^{l}_{lk}\right)$ \\ 
		\hline
		Downlink &   $a_{lk} = M \rho_{\rm dl} \gamma^{l}_{lk}$ \newline $b^{l'k'}_{lk} = \rho_{\rm dl} \beta^{l'}_{lk}$ \newline $c^{l'}_{lk} = M \rho_{\rm dl} \gamma^{l'}_{lk}$ \newline $d_{lk} = 1$& $a_{lk}=\rho_{\rm dl} \rho^2_{\rm ul}\tau_{p}^2 
		{\mathrm{tr}}\left(\mathbf{D}^{l}_{lk}\mathbf{\Lambda}^{l}_{lk}\mathbf{D}^{l}_{lk}\right)^2$ \newline $b^{l'k'}_{lk} = \rho_{\rm dl} \frac{\chi^{\rm ul}_{l'k'}{\mathrm{tr}}\left(\mathbf{\Sigma}^{l}_{lk}\right)}{{\mathrm{tr}}\left(\mathbf{\Sigma}^{l'}_{l'k'}\right)}$\newline $c^{l'}_{lk} =\rho_{\rm dl}^2 \frac{{\mathrm{tr}}\left(\mathbf{R}^{l'}_{lk}\mathbf{\Sigma}^{l}_{l'k'}\right)}{{\mathrm{tr}}\left(\mathbf{\Sigma}^{l'}_{l'k'}\right)}$ \newline $d_{lk} = {\mathrm{tr}}\left(\mathbf{\Sigma}^{l}_{lk}\right)$ \\ 
		\hline
	\end{tabular}\label{coeff}
\end{center}
\end{table}

\section{Problem Formulation} 
\label{probelms}
This section motivates and defines the problem formulation. Specifically, we formulate and solve a new multi-cell MMF power control problem.
In order to evaluate and compare our proposed scheme with the state-of-the-art, we also define and solve two additional optimization problems. In the following subsection, we introduce our proposed method that is solving the GM of per cell MMF problem. This method resolves the scalability issue that happens when applying NW-MMF and NW-PF. These methods and the scalability issue are explained in detail in Subsections \ref{NW-MMF formulation} and \ref{NW-PF}.
\subsection{Proposed: Geometric-mean per-cell max-min fairness}
\label{GM PC-MMF formulation}
To solve the scalability issue of NW-MMF while keeping the focus on user fairness, we formulate a new optimization problem in which the optimization objective is the GM of per-cell max-min SE \footnote{In this work, we provide fairness by using the product of SEs in the objective function because the SEs are the operationally meaningful physical quantities. Moreover, using SEs will benefits the users with weaker channels as $y=\log_2(1+x)$ decreases slower than $y=x$. Therefore the users with low SINRs are penalized less in the optimization procedure.}. This optimization problem achieves MMF locally in each cell and performs proportional fairness between the cells in the network. Therefore, solving this optimization problem offers the benefit that a user in deep fading condition mainly affects the SE of its serving cell, but not the whole network. Hence, this problem does not suffer from the scalability issue. The optimization problem for the UL data transmission is
%=============================eq(15)
\begin{equation}\label{eq:opt_problem1}
\begin{aligned}
& \underset{\{t_l\},\{\eta_{lk}\}}{\mathrm{maximize}}
& &  \prod_{l=1}^L \log_2\left(1+\epsilon+t_l\right)   \\
& {\mathrm{subject~to}}
& &  0 \leq \eta_{lk} \leq 1, \forall~l,k, \\
& & & {\mathrm{SINR}}^{\rm ul}_{lk} \left(\left\{\eta_{lk}\right\}\right) \geq t_l, \forall~l,k,\\
\end{aligned}
\end{equation}
%=============================
where $t_l$ is the minimum SINR of cell $l$ and $\epsilon> 0$ is a small control parameter. Note that ${\mathrm{SINR}}^{\rm ul}_{lk} \left(\left\{\eta_{lk}\right\}\right)$ can be expressed by either \eqref{eq:ul_sinr_corr}  or \eqref{eq:ul_sinr} depending on the channel model assumption. 
%=============================lemma 1
\begin{lemma}
	The control parameter $\epsilon>0$ prevents the utility function of \eqref{eq:opt_problem1} from being identically zero when at least one cell $l$ has a user with $\underset{k}{\mathrm{min}}({\beta_{lk}^{l}}) = 0$. Hence, the cells $l'$ with $\underset{k}{\mathrm{min}}({\beta_{l'k}^{l'}}) > 0$ will have $t_{l'}>0$ at the optimal point to \eqref{eq:opt_problem1}. 
\end{lemma}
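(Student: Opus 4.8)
The plan is to separate the statement into two parts: the non-vanishing of the utility, and the strict positivity of $t_{l'}$ for well-conditioned cells.

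For the first part, I would start from the observation that a user with $\min_k \beta^l_{lk}=0$ has an identically zero SINR. Inspecting Table~\ref{coeff}, the numerator coefficient $a_{lk}$ in \eqref{eq:SINR_general} is proportional to $\gamma^l_{lk}$ in the uncorrelated case and to $\mathrm{tr}(\mathbf{D}^l_{lk}\mathbf{\Lambda}^l_{lk}\mathbf{D}^l_{lk})^2$ in the correlated case. Now $\beta^l_{lk}=0$ forces $\gamma^l_{lk}=0$, and it also forces $\mathbf{R}^l_{lk}=\mathbf{0}$ (a positive semi-definite matrix of zero trace is the zero matrix, whence $\mathbf{D}^l_{lk}=\mathbf{0}$), so in both cases $a_{lk}=0$ and hence $\mathrm{SINR}_{lk}(\{\eta_{lk}\})=0$ for every feasible power allocation. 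The constraint ${\mathrm{SINR}}^{\rm ul}_{lk}\ge t_l$ then forces $t_l\le 0$ for such a cell, so its factor in \eqref{eq:opt_problem1} equals $\log_2(1+\epsilon)$ at best. With $\epsilon=0$ this factor is $\log_2(1)=0$ and the whole product collapses at every feasible point, i.e.\ the utility is identically zero; with $\epsilon>0$, the all-off allocation $\{\eta_{lk}=0\}$ (for which $t_l=0$ is feasible in every cell) already yields objective $(\log_2(1+\epsilon))^L>0$, so the utility is not identically zero, and more generally every factor obeys $\log_2(1+\epsilon+t_l)\ge\log_2(1+\epsilon)>0$ whenever $t_l\ge 0$, as holds at any optimum. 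This settles the first claim.

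For the second part, the key structural fact I would isolate is that all powers obey $\eta_{lk}\le 1$, so the interference-plus-noise denominator in \eqref{eq:SINR_general} is upper bounded by the constant $\bar D_{l'k}$ obtained by setting every $\eta=1$; therefore $\mathrm{SINR}_{l'k}\ge a_{l'k}\eta_{l'k}/\bar D_{l'k}$. In a cell with $\min_k\beta^{l'}_{l'k}>0$ all $a_{l'k}>0$, so assigning each of its users a strictly positive power produces a strictly positive per-cell minimum SINR; hence $t_{l'}>0$ is feasible. I would then argue by contradiction: if an optimal solution had $t_{l'}^{*}=0$ at such a cell, some user of cell $l'$ would carry zero power, and activating all zero-power users of that cell with a small common level $\delta$ raises $t_{l'}$ from $0$ at a strictly positive rate while perturbing the remaining $t_l$ only at bounded rates.

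The step I expect to be the main obstacle is controlling the coupling introduced by this perturbation: raising the powers in cell $l'$ increases the interference seen by the other cells and thus lowers their $t_l$, so I must show that the net change of the (log-)objective is positive. Here the geometric-mean structure is decisive. Differentiating $\sum_l \ln\log_2(1+\epsilon+t_l)$ at $\delta=0^{+}$, the cell-$l'$ term carries the weight $1/[\log_2(1+\epsilon)\,(1+\epsilon)\ln 2]$ times a strictly positive slope, whereas every competing cell contributes a bounded nonpositive term. Because the marginal log-utility of a cell sitting at $t=0$ blows up like $1/\log_2(1+\epsilon)$, it dominates the bounded losses for the small control parameter $\epsilon$ the scheme is designed around, so the perturbed allocation is strictly better, contradicting optimality. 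I would also note the convenient preprocessing step that the powers of the ill-conditioned cells may be set to zero without decreasing the objective (their factor is fixed at $\log_2(1+\epsilon)$), which only relaxes the interference and simplifies the bookkeeping; making this domination estimate uniform in $\epsilon$, rather than merely for small $\epsilon$, is the delicate quantitative point.
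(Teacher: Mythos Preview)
Your first part---showing that $\beta^l_{lk}=0$ forces $a_{lk}=0$ and hence $t_l=0$, so the product equals $(\log_2(1+\epsilon))^{L'}\prod_{l>L'}\log_2(1+\epsilon+t_l)$, which vanishes for $\epsilon=0$ and is strictly positive for $\epsilon>0$---is exactly the paper's argument, only with the $a_{lk}=0$ implication spelled out more carefully than the paper does.

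For the second part, the paper's proof is considerably less detailed than yours: it simply asserts that once the ill-conditioned cells are ``effectively removed from the objective function,'' the remaining cells ``will have $t_{l'}>0$ at the optimal point,'' without any perturbation analysis or contradiction argument. Your approach---fixing an optimum with $t_{l'}^*=0$, activating the zero-power users of cell $l'$ at level $\delta$, and comparing the divergent marginal weight $1/\log_2(1+\epsilon)$ at $t=0$ against the bounded weights of the other cells---supplies the mechanism the paper leaves implicit. The subtlety you flag (making the domination uniform in $\epsilon$ rather than only for small $\epsilon$) is real, but the paper does not address it at all; since the scheme is explicitly designed with $\epsilon$ small, your version is already at least as complete as the original.
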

%=============================
%=============================proof
\begin{proof}
	Cells with $\underset{k}{\mathrm{min}}({\beta_{lk}^{l}}) = 0$ must have $t_l=0$ at the optimal point, while all other cells can have $t_{l'}>0$.
	Define $f^{*}$ as
	%=======================================eq(25)
	\begin{equation}
	\begin{aligned}
	f^{*} &= \prod_{l'' =1}^{L'} \log_2\left(1+\epsilon+t_{l''}\right)  \prod_{l= L'+1}^L \log_2\left(1+\epsilon+t_l\right).
	\end{aligned}
	\end{equation}
	%=======================================
	Suppose $t_{l} = 0 $ for $l = 1,\dots,L'$ and $t_{l} > 0$ otherwise. Therefore, we have
	%=======================================eq(26) 
	\begin{equation}
	\begin{aligned}
	f^{*} &=\left(\log_2\left(1+\epsilon\right)\right)^{L'} \prod_{l = L'+1}^L \log_2\left(1+\epsilon+t_l\right),
	\end{aligned}
	\end{equation}
	where $f^{*} = 0 $ for $\epsilon = 0$ and $f^{*}$ is non-zero for any $\epsilon>0$. When solving \eqref{eq:opt_problem1}, the cells with $t_{l}=0$ are effectively removed from the objective function, while the remaining cells will have $t_{l'}>0$ at the optimal point.
\end{proof}
%=======================================

We define $1_{\epsilon}= 1+\epsilon$ to simplify the notation in the remainder of the paper. Note that by using a very small $\epsilon$, the proposed optimization problem achieves scalability without affecting the overall utility. The first constraint in \eqref{eq:opt_problem1} deals with the power control coefficients for the UL data transmission of the users in each cell, and the second constraint is to guarantee a particular SINR $t_l$ to all $K$ users in cell $l$. This constraint guarantees to give the same SINR to every user within a cell, but the SINR value can be different from other cells. Therefore, a cell where all users have poor channels will not prevent the users in other cells from achieving higher SINRs. The GM utility of the per-cell MMF SEs provides proportional fairness between cells. The following optimization problem is the DL counterpart to \eqref{eq:opt_problem1}:  
%=============================eq(27)
\begin{equation}\label{eq:opt_problem1DL}
\begin{aligned}
& \underset{\{t_l\},\{\eta_{lk}\}}{\mathrm{maximize}}
& &  \prod_{l=1}^L \log_2\left(1_{\epsilon}+t_l\right)   \\
& {\mathrm{subject~to}}
& &   \eta_{lk} \geq 0,~\forall~l,k, \\
& & &  \sum_{k = 1}^{K}\eta_{lk} \leq 1,~\forall~l, \\
& & & {\mathrm{SINR}}^{\rm{dl}}_{lk} \left(\left\{\eta_{lk}\right\}\right) \geq t_l,~\forall~l,k.\\
\end{aligned}
\end{equation}
%=============================
The differences from the UL are the SINR expressions being used and the power constraints, which are now reflecting the fact that each BS can distribute its power arbitrarily between its users in the cell. Note that we can use the corresponding SINR expressions for DL data transmission with respect to the channel model assumption provided in \eqref{eq:dl_sinr_corr} and \eqref{eq:dl_sinr}.

Note that the optimization problem formulations for UL and DL that are given in \eqref{eq:opt_problem1} and \eqref{eq:opt_problem1DL} are different and both are important to solve in practice. First of all, the SE expressions are different since the interference comes from different transmitters, which makes the objective functions different. Moreover, the power constraints are different, which makes the feasible sets different. There are individual power constraints for each user in the UL, while in the DL there are sum power constraints for each BS. However, we show that our proposed optimization criterion is applicable for both cases and there is not a significant difference between them in terms of optimization methods for the proposed scheme, but the simulation results in Section \ref{result} show that the actual SEs are behaving very differently, which is why it is important to study both cases.

\subsection{Network-wide max-min fairness}
\label{NW-MMF formulation}
Here, we consider NW-MMF power control for multi-cell massive MIMO network, which has been previously studied in \cite{cheng2017optimal,van2016joint,van2017joint,yang2017massive,xiang2014massive}. Notice that NW-MMF is the ideal utility function in a network where everyone has the same demand for data. It provides equal performance among all the users by prioritizing the user with the weakest channel. The NW-MMF problem for UL data transmission is defined as \cite[Ch.~7]{bjornson2017massive}
%=============================eq(19)
\begin{equation}\label{eq:opt_problem2}
\begin{aligned}
& \underset{\{\eta_{lk}\}}{{\mathrm{maximize}}}\quad \underset{l,k}{\mathrm{min}}
& &  {\mathrm{SINR}}_{lk}^{\rm{ul}} \left(\left\{\eta_{lk}\right\}\right)  \\
& {\mathrm{subject~to}}
& &  0 \leq \eta_{lk} \leq 1, \forall~l,k. \\
\end{aligned}
\end{equation}
%=============================
Note that ${\mathrm{SINR}}_{lk}^{\rm{ul}}\left(\left\{\eta_{lk}\right\}\right)$ in \eqref{eq:opt_problem2} can be defined based on the channel model in which we can replace it with either \eqref{eq:ul_sinr_corr} or \eqref{eq:ul_sinr} for correlated and uncorrelated Rayleigh fading channel model, respectively. The alternative models in Section \ref{otherChannelModels} can also be used.

However, the NW-MMF scheme is not scalable and by increasing the number of cells in the network, we may end up with zero SE for all users---uniform but bad performance for everyone. It happens because the probability of having a user in deep fade due to shadow fading increases and this penalizes the whole network.

\begin{lemma}
	We assume $\{\beta^{l}_{lk}\}$ to be a set of i.i.d. random variables with $\beta^{l}_{lk} \in [0,1]$ and their CDF is denoted by $F(x) \in [0,1]$. In addition, we assume that $F(\epsilon) > 0,\quad\forall \epsilon>0$. In a multi-cell massive MIMO system  
	\begin{equation}
	\underset{l,k}{\mathrm{min}}~\{\beta_{lk}^{l}\} \to 0 ~\textrm{as}~L \to \infty.
	\end{equation}
	Due to this fact and with the SINRs given by \eqref{eq:SINR_general} and $d_{lk}>0$, it follows that	%====================================eq(22)
	\begin{equation} \label{eq:SINR-convergence}
	\underset{l,k}{\mathrm{min}}~{\mathrm{SINR}} _{lk}\left(\left\{\eta_{lk}\right\}\right) \to 0~\textrm{as}~L \to \infty.
	\end{equation}
	Hence, SE$\to0$ at the optimal solution of NW-MMF problem for all the users as $L\to \infty$.
\end{lemma}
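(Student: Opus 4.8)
The plan is to prove the three assertions in order, the first two feeding the last.

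\emph{Step 1 (minimum of the large-scale coefficients).} First I would show $\min_{l,k}\beta^{l}_{lk}\to 0$. Since the $\{\beta^{l}_{lk}\}$ are i.i.d.\ on $[0,1]$ with CDF $F$, for every fixed $\epsilon>0$ the complementary event factorizes across the $LK$ variables, so that
\[
\Pr\!\left(\min_{l,k}\beta^{l}_{lk}>\epsilon\right)=\bigl(1-F(\epsilon)\bigr)^{LK}.
\]
The hypothesis $F(\epsilon)>0$ makes the base strictly less than $1$, so this probability tends to $0$ as $L\to\infty$. Since $\epsilon>0$ is arbitrary and the minimum is non-negative, this is precisely convergence (in probability) of $\min_{l,k}\beta^{l}_{lk}$ to $0$.

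\emph{Step 2 (SINR of the worst user).} Next I would pass from the coefficients to the SINR by focusing on the single user $(l^\star,k^\star)$ attaining $\beta^{l^\star}_{l^\star k^\star}=\min_{l,k}\beta^{l}_{lk}$. In the general form \eqref{eq:SINR_general} all denominator terms are non-negative because $b^{l'k'}_{lk},c^{l'}_{lk}\ge 0$ and $\eta_{l'k'}\ge 0$; hence the denominator is at least $d_{l^\star k^\star}>0$, and using $\eta_{l^\star k^\star}\le 1$ (valid under both the UL box constraint and the DL constraint $\sum_k\eta_{lk}\le1$) I obtain a bound that is uniform over all feasible power allocations,
\[
\mathrm{SINR}_{l^\star k^\star}\bigl(\{\eta_{lk}\}\bigr)\le \frac{a_{l^\star k^\star}}{d_{l^\star k^\star}}.
\]
Everything then reduces to showing that $a_{lk}/d_{lk}\to0$ as $\beta^{l}_{lk}\to0$.

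I expect this ratio estimate to be the main obstacle, and it splits by channel model. For the uncorrelated model it is easy: $d_{lk}=1$ is constant while $a_{lk}=M\rho_{\rm ul}\gamma^{l}_{lk}$ with $\gamma^{l}_{lk}\propto(\beta^{l}_{lk})^2$ over a denominator bounded below by $1$, so $a_{lk}\to0$. For the correlated model $d_{lk}=\mathrm{tr}(\mathbf{\Sigma}^{l}_{lk})$ itself vanishes, so a bare lower bound on the denominator no longer suffices and I must compare decay rates. The idea is that $\beta^{l}_{lk}\to0$ forces $\mathrm{tr}(\mathbf{R}^{l}_{lk})=M\beta^{l}_{lk}\to0$; positive semi-definiteness then makes the diagonal $\mathbf{D}^{l}_{lk}$ vanish entrywise, while $\mathbf{\Lambda}^{l}_{lk}$ stays bounded and $(\mathbf{\Psi}^{l}_{lk})^{-1}\succeq\mathbf{I}_M$. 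Writing both $a_{lk}$ and $d_{lk}$ through the diagonal of $\mathbf{D}^{l}_{lk}$ and applying Cauchy--Schwarz in the form $[\mathrm{tr}(\mathbf{D}\mathbf{\Lambda}\mathbf{D})]^2\le\mathrm{tr}(\mathbf{D}^2)\,\mathrm{tr}(\mathbf{D}\mathbf{\Lambda}^2\mathbf{D})$ together with $\mathrm{tr}(\mathbf{D}^2)\le(\mathrm{tr}\,\mathbf{D})^2=(M\beta^{l}_{lk})^2$, the common factor $\mathrm{tr}(\mathbf{D}\mathbf{\Lambda}^2\mathbf{D})$ cancels between numerator and denominator and leaves $a_{lk}/d_{lk}=O\!\left((\beta^{l}_{lk})^2\right)\to0$.

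\emph{Step 3 (conclusion).} Combining the steps, for the worst user $\min_{l,k}\mathrm{SINR}_{lk}\le\mathrm{SINR}_{l^\star k^\star}\le a_{l^\star k^\star}/d_{l^\star k^\star}$ for every feasible allocation, so the optimal value of \eqref{eq:opt_problem2} is forced to $0$ as $L\to\infty$, which is \eqref{eq:SINR-convergence}. Finally, since the max--min optimum equalizes the SINRs and thus delivers this common worst-case value to every user, and since the map \eqref{eq:SE_ul} from SINR to SE is monotone with value $0$ at SINR $0$, the SE of all users tends to $0$, establishing the final claim.
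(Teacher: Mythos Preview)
Your proof is correct and follows essentially the same route as the paper: the same i.i.d.\ factorization $\Pr(\min_{l,k}\beta^{l}_{lk}>\epsilon)=(1-F(\epsilon))^{LK}\to0$, then the same uniform bound $\mathrm{SINR}_{lk}\le a_{lk}\eta_{lk}/d_{lk}$ applied to the worst user, and the same monotonicity argument to conclude that all users get zero SE at the max--min optimum.

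The one substantive difference is that you do more work than the paper in Step~2. The paper simply asserts that ``$a_{lk}$ of the worst user goes to zero when $\min_{l,k}\{\beta^{l}_{lk}\}\to0$ in all the considered models'' and stops there. You correctly observe that in the correlated model $d_{lk}=\mathrm{tr}(\mathbf{\Sigma}^{l}_{lk})$ also vanishes as $\beta^{l}_{lk}\to0$, so one must compare rates rather than argue about $a_{lk}$ alone; your Cauchy--Schwarz estimate $[\mathrm{tr}(\mathbf{D}\mathbf{\Lambda}\mathbf{D})]^2\le\mathrm{tr}(\mathbf{D}^2)\,\mathrm{tr}(\mathbf{D}\mathbf{\Lambda}^2\mathbf{D})$ together with $(\mathbf{\Psi}^{l}_{lk})^{-1}\succeq\mathbf{I}_M$ cleanly cancels the shared factor and yields $a_{lk}/d_{lk}=O((\beta^{l}_{lk})^2)$. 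This is a genuine refinement: it closes a gap the paper leaves open, at the cost of a few extra lines.
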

\begin{proof}
Define $B  = \underset{l,k}{\mathrm{min}}~\{\beta_{lk}^{l}\} $. We write the probability of $B \le \epsilon$ (where $\epsilon$ is an arbitrary non-negative small number) as  
\begin{equation}
\mathrm{Pr}\left(B \leq \epsilon\right) = 1-\left(1-F\left(\epsilon\right)\right)^{LK}.
\end{equation}
By taking the limit when $L \to \infty$
\begin{equation}
\underset{L \to \infty}{\lim}\mathrm{Pr}\left(B \leq \epsilon \right) = 1.  
\end{equation}
Therefore due to the assumption that $\epsilon$ is an arbitrary non-negative small number,  $B$ converges in distribution to zero when $L \to \infty$.
The SINR in \eqref{eq:SINR_general} is upper bounded by $\frac{a_{lk} \eta_{lk}}{d_{lk}}$. Note that $a_{lk}$ of the worst user goes to zero when $\underset{l,k}{\mathrm{min}}~\{\beta_{lk}^{l}\} \to 0$ in all the considered models. This proves \eqref{eq:SINR-convergence} and implies that the utility function in \eqref{eq:opt_problem2} goes to zero. Therefore, an optimal solution is to give zero SE for all the users.
\end{proof}

The NW-MMF optimization problem for the DL data transmission is similar to \eqref{eq:opt_problem2} but we use the corresponding SINR expression for the DL data transmission by using \eqref{eq:dl_sinr_corr} or \eqref{eq:dl_sinr} to express ${\mathrm{SINR}}^{\rm{dl}}_{lk}\left(\left\{\eta_{lk}\right\}\right)$ depending on the channel model assumption. The NW-MMF optimization problem is written as
%=============================eq(24)
\begin{equation}\label{eq:opt_problem2DL}
\begin{aligned}
& \underset{\{\eta_{lk}\}}{{\mathrm{maximize}}}\quad \underset{l,k}{{\mathrm{min}}}
& &  {\mathrm{SINR}}_{lk}^{\rm{dl}} \left(\left\{\eta_{lk}\right\}\right) \\
& {\mathrm{subject~to}}
& & \eta_{lk} \geq 0, \forall~l,k, \\
& & &\sum_{k=1}^{K} \eta_{lk} \leq 1,~\forall l.
\end{aligned}
\end{equation}
%=============================
This optimization problem also suffers from the same scalability issue as \eqref{eq:opt_problem2}. This is due to the fact that {\it{Lemma}} $1$ applies to the general SINR expression in \eqref{eq:SINR_general}, which covers DL data transmission as well. Note that adding a small $\epsilon>0$ to the objective functions of \eqref{eq:opt_problem2} and \eqref{eq:opt_problem2DL} is not solving the scalability issue of NW-MMF. This is due to the fact that in the case of zero SINR for any user in the network, the optimal solution is to assign $\log_2(1+\epsilon)$ as the SE for all the users. Due to the small value of $\epsilon$, $\log_2(1+\epsilon)$ is approximately equal to zero. Therefore, one can say that the solution still provide zero SE for all the users.

\subsection{Network-wide proportional fairness}
\label{NW-PF}
Next, we consider an alternative network utility function with the product of the SINRs. Maximizing this objective provides NW-PF with respect to the SINRs of the users in the network. It is shown in  \cite[Sec.~7.1]{bjornson2017massive} that this objective is a lower bound on the sum SE of the network, but with greater emphasis on fairness since the utility is zero if any user gets zero SE. 

 We can write the optimization problem for UL data transmission as 
%=============================eq(25)
\begin{equation}\label{eq:opt_problem3}
\begin{aligned}
& \underset{\{t_{lk}\},\{\eta_{lk}\}}{\mathrm{maximize}}
& & \prod_{l=1}^L \prod_{k=1}^K t_{lk}   \\
& {\mathrm{subject~to}}
& &  0 \leq \eta_{lk} \leq 1, \forall l,k, \\
& & & {\mathrm{SINR}}^{\rm ul}_{lk} \left(\left\{\eta_{lk}\right\}\right)\geq t_{lk}, \forall l,k,\\
\end{aligned}
\end{equation}
%=============================
where $t_{lk}$ indicates the effective SINR of user $k$ located at cell $l$. 
The corresponding DL optimization problem is formulated as
%=============================eq(26) 
\begin{equation}\label{eq:opt_problem3DL}
\begin{aligned}
& \underset{\{t_{lk}\},\{\eta_{lk}\}}{\mathrm{maximize}}
& & \prod_{l=1}^L \prod_{k=1}^K t_{lk}   \\
& {\mathrm{subject~to}}
& &   \eta_{lk} \geq 0,k, \forall l \\
& & & \sum_{k=1}^{K}\eta_{lk} \leq 1,  \forall l \\
& & & {\mathrm{SINR}}^{\rm dl}_{lk} \left(\left\{\eta_{lk}\right\}\right) \geq t_{lk}, \forall  l,k.\\
\end{aligned}
\end{equation}
%=============================
The difference between this optimization problem and NW-MMF is that this optimization problem deals with each user individually, so there will be large SE differences within a cell. The SINR expressions in \eqref{eq:opt_problem3} and \eqref{eq:opt_problem3DL} can be replaced with the corresponding expression based on the data transmission direction and channel model assumption.

However, similar to NW-MMF, this scheme is not scalable and by increasing the number of cells in the network, we may end up with zero SE for all users. This happens due to the fact that increasing the number of cells in the network or increasing the number of users, increases the probability of having a user in deep fade due to shadow fading which corresponds to $\beta_{lk}^{l} \to 0$ for some $l,k$. Note that  ${\mathrm{SINR}}^{\rm ul/dl}_{lk} \to 0$ if $\beta_{lk}^{l} \to 0$ implies that $\prod\limits_{l=1}^L \prod\limits_{k=1}^K t_{lk} \to 0$. In other words, if one user has a zero SINR then the product of the SINRs becomes zero as well. Hence, giving zero SINR to all users will be an optimal solution to the NW-PF problem. One could think of using a similar trick as provided in our proposed method for solving this issue by introducing a new control parameter $\epsilon > 0$ and  replacing the objective functions in \eqref{eq:opt_problem3} and \eqref{eq:opt_problem3DL} by $ \underset{\{t_{lk}\},\{\eta_{lk}\}}{\mathrm{maximize}} \prod\limits_{l=1}^L \prod\limits_{k=1}^K \left(t_{lk} +\epsilon\right)$. However, after adding the $\epsilon$, the problem becomes much harder to solve and cannot be tackled using the approach described in the next section. The complexity is basically the same as maximizing sum SE, which is known to be an NP-hard problem. Therefore, the same approach does not solve the scalability issue of NW-PF problem. Hence, we can see that our proposed method is a practical approach that provides scalable power control for multi-cell massive MIMO systems.
Note that, in the case of NW-MMF and NW-PF, having a user with zero SINR in the network will result in zero SINR for all the users in the whole network. This happens due to the optimization objectives in the problem formulations \eqref{eq:opt_problem2}, \eqref{eq:opt_problem2DL}, \eqref{eq:opt_problem3}, and \eqref{eq:opt_problem3DL}. In our proposed optimization formulation, the SINR constraint in \eqref{eq:opt_problem1} is given as ${\mathrm{SINR}}^{\rm ul}_{lk} \left(\left\{\eta_{lk}\right\}\right) \geq t_l, \forall~l,k,$ which corresponds to max-min fairness within each cell $l$. If we have one user that can only get zero SINR, then that will affect the serving cell of this user and lead to zero SINR to everyone located in that cell. However, this does not apply to the users in other cells, which operate as if the cell with the zero-SINR user does not exist.

\section{Solutions to the Proposed Problems}
\label{solution}
In this section, we provide solutions to the optimization problems introduced in Section \ref{probelms}. First, we solve the proposed GM per-cell MMF power control for the UL data transmission given in \eqref{eq:opt_problem1}. We can rewrite the optimization problem as
%=============================eq(27)
\begin{equation}\label{eq:opt_problem_re}
\begin{aligned}
& \underset{\{t_l\},\{\eta_{lk}\}}{\mathrm{maximize}}
& &  \sum_{l=1}^L \log \left( \log_2\left(1_{\epsilon}+t_l\right)\right)   \\
& {\mathrm{subject~to}}
& &  0 \leq \eta_{lk} \leq 1, \forall~l,k, \\
& & & {\mathrm{SINR}}^{\rm ul}_{lk}\left(\left\{\eta_{lk}\right\}\right) \geq t_l, \forall~l,k,\\
\end{aligned}
\end{equation}
%=============================
since the logarithm is a monotonically increasing function. The problems are equivalent in the sense that solving them provides the same optimal solution. We will now prove that \eqref{eq:opt_problem_re} can be rewritten as a convex problem. By the change of variables
%=============================eq(28)
\begin{equation}
\begin{aligned}
t_l &= e^{\bar{t}_l},\quad\quad\eta_{lk} & = e^{\bar{\eta}_{lk}},
\end{aligned}
\end{equation}
%=============================
we obtain the following  equivalent problem:
%=============================eq(29)
\begin{equation}\label{eq:opt_problem_re2}
\begin{aligned}
& \underset{\{\bar{t}_l\},\{\bar{\eta}_{lk}\}}{\mathrm{maximize}}
& &  \sum_{l=1}^L \log\left(\log_2\left(1_{\epsilon}+e^{\bar{t}_l}\right)\right)   \\
& {\mathrm{subject~to}}
& & e^{\bar{\eta}_{lk}} \leq 1, \forall~l,k \\
& & & \hspace{-1.3cm}\frac{a_{lk}e^{\bar{\eta}_{lk}}}{\sum\limits_{l'=1}^{L}\sum\limits_{k' = 1}^{K} b^{l'k'}_{lk}  e^{\bar{\eta}_{l'k'}}+\sum\limits_{l'\in \mathcal{P}_{l}\backslash \{ l \}}^{}c^{l'}_{lk}e^{\bar{\eta}_{l'k}}+d_{lk}} \geq e^{\bar{t}_l}, \forall~l,k.\\
\end{aligned}
\end{equation}
%=============================
Note that, we used the general SINR expression of \eqref{eq:SINR_general} in the last constraint of \eqref{eq:opt_problem_re2}, thus the problem formulation applies to both correlated and uncorrelated fading or any of the other cases mentioned in Section \ref{otherChannelModels}. We observe that the constraints can be rewritten as
%=============================eq(30)
\begin{equation}
\begin{aligned}
& \sum\limits_{l'=1}^{L}\sum\limits_{k'=1}^{K} b^{l'k'}_{lk} e^{\bar{\eta}_{l'k'}+\bar{t}_l-\bar{\eta}_{lk}} \\
&+\sum\limits_{l'\in \mathcal{P}_{l}\backslash \{ l \}}^{}c^{l'}_{lk}e^{\bar{\eta}_{l'k}+\bar{t}_l-\bar{\eta}_{lk}}+d_{lk}e^{\bar{t}_l-\bar{\eta}_{lk}}\leq a_{lk}.
\end{aligned}
\end{equation}
%=============================
After taking the logarithm of both sides, we have a log-sum-exponential function, which is a convex function less than or equal to a constant. This is a convex constraint. In addition, we take the logarithm of both sides in the other constraints i.e., $e^{\bar{\eta}_{lk}} \leq 1, \forall~l,k $, which becomes ${\bar{\eta}_{lk}} \leq 0, \forall~l,k,$ and it is a convex constraint as well. Hence, finally we have the equivalent problem as given in \eqref{eq:opt_problem_re2_fin}.
\begin{figure*}
%=============================eq(31)
\begin{equation}\label{eq:opt_problem_re2_fin}
\begin{aligned}
& \underset{\{\bar{t}_l\},\{\bar{\eta}_{lk}\}}{\mathrm{maximize}}
& &  \sum_{l=1}^L \log\left(\log_2\left(1_{\epsilon}+e^{\bar{t}_l}\right)\right)   \\
& {\mathrm{subject~to}}
& & \bar{\eta}_{lk} \leq 0, \forall~l,k \\
& & &\log\left( \sum\limits_{l'=1}^{L}\sum\limits_{k'=1}^{K} b^{l'k'}_{lk} e^{\bar{\eta}_{l'k'}+\bar{t}_l-\bar{\eta}_{lk}} +\sum\limits_{l'\in \mathcal{P}_{l}\backslash \{ l \}}^{}c^{l'}_{lk}e^{\bar{\eta}_{l'k}+\bar{t}_l-\bar{\eta}_{lk}}+ d_{lk}e^{\bar{t}_l-\bar{\eta}_{lk}}+\right)\leq \log\left(a_{lk}\right), \forall~l,k.\\
\end{aligned}
\end{equation}
%=============================
%hrulefill
\vspace*{-0.5cm}
\end{figure*}

Therefore the only remaining concern is whether the objective function in \eqref{eq:opt_problem_re2} is concave or not. We provide the following theorem that shows that the objective function is a concave function and thus \eqref{eq:opt_problem_re2} is a convex problem. 
%=============================Theorem 1
\begin{lemma}
	The function  $ \log \left(\log_2 \left( 1_{\epsilon}+ e^{x}\right)\right)$ is a concave function with respect to $x,\forall x$.
\end{lemma}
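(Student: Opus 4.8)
The plan is to establish concavity directly from the one-dimensional second-derivative test, so that the entire argument reduces to a single scalar calculus inequality. First I would strip away the inessential constants: since $\log_2 y = \ln y/\ln 2$, we have
\[
\log\!\left(\log_2\!\left(1_{\epsilon}+e^{x}\right)\right) = \log\!\left(\ln\!\left(1_{\epsilon}+e^{x}\right)\right) - \log(\ln 2),
\]
and the additive constant $-\log(\ln 2)$ does not affect concavity. Hence it suffices to prove that $F(x)=\log\!\left(u(x)\right)$ is concave, where $u(x)=\ln\!\left(1_{\epsilon}+e^{x}\right)$; note that $u(x)>0$ for all $x$ because $1_{\epsilon}>1$, so $F$ is well defined and twice differentiable everywhere.

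A tempting shortcut --- writing the function as (concave increasing) $\circ$ (concave) --- does \emph{not} apply here, because $u(x)=\ln(1_{\epsilon}+e^{x})$ is itself convex (its second derivative below is strictly positive), so no standard composition rule delivers the claim and a direct computation is unavoidable. Differentiating, $u'(x)=e^{x}/(1_{\epsilon}+e^{x})$ and $u''(x)=1_{\epsilon}e^{x}/(1_{\epsilon}+e^{x})^{2}$. Since $F=\log u$, we have $F''=(u''u-(u')^{2})/u^{2}$, so $F''\le 0$ is equivalent to $u''u\le (u')^{2}$. Substituting the expressions above and clearing the common positive factor $e^{x}/(1_{\epsilon}+e^{x})^{2}$, the whole concavity claim collapses to the single scalar inequality
\[
1_{\epsilon}\,\ln\!\left(1_{\epsilon}+e^{x}\right)\;\le\; e^{x},\qquad \forall x.
\]
Everything preceding this step is routine differentiation, so the lemma stands or falls on this one inequality.

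The main obstacle is precisely establishing that inequality, and I would isolate it as a function of the single variable $t=e^{x}>0$, i.e.\ show $1_{\epsilon}\ln(1_{\epsilon}+t)\le t$ for all $t>0$. The natural route is to study $G(t)=t-1_{\epsilon}\ln(1_{\epsilon}+t)$: compute $G'(t)=1-1_{\epsilon}/(1_{\epsilon}+t)$, locate its unique stationary point, and check the sign of $G$ there together with the limits $t\to 0^{+}$ and $t\to\infty$. In the boundary case $\epsilon=0$ (so $1_{\epsilon}=1$) this reduces exactly to the classical inequality $\ln(1+t)\le t$, which holds for all $t>0$ and cleanly closes the argument. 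The delicate point, where I would spend the most care, is the regime $\epsilon>0$ with small $t$: there the left-hand side tends to the strictly positive constant $1_{\epsilon}\ln(1_{\epsilon})$ while the right-hand side vanishes, so confirming the inequality on the \emph{full} relevant domain --- not merely for large $x$ --- is the crux of the proof and the step most in need of scrutiny.
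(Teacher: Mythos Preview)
Your approach is the same as the paper's: reduce to the second-derivative test, compute $F''$, and try to show it is nonpositive by verifying a single scalar inequality. Your derivative computation is correct, and in fact cleaner than the paper's. With $c:=1_{\epsilon}=1+\epsilon$ and $u(x)=\ln(c+e^{x})$, one obtains
\[
F''(x)=\frac{e^{x}}{(c+e^{x})^{2}\,u(x)^{2}}\bigl[c\,\ln(c+e^{x})-e^{x}\bigr],
\]
so concavity on all of $\mathbb{R}$ is equivalent to $c\,\ln(c+t)\le t$ for every $t>0$. This is exactly the inequality you isolate.

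The point you flag as ``most in need of scrutiny'' is not merely delicate --- it is false. For any $\epsilon>0$ and $t\to 0^{+}$ the left side tends to $c\ln c>0$ while the right side tends to $0$, so the inequality fails on an interval $(0,t_{0})$, and hence $F''(x)>0$ for all sufficiently negative $x$. In other words, $\log\bigl(\log_{2}(1_{\epsilon}+e^{x})\bigr)$ is \emph{not} concave on all of $\mathbb{R}$ when $\epsilon>0$; the statement as written holds only in the limiting case $\epsilon=0$. The paper's own argument obscures this: when expanding $(1_{\epsilon}+e^{x})e^{x}$ it silently replaces $1_{\epsilon}$ by $1$, obtaining the factor $\bigl(1-\tfrac{e^{x}}{\log(1_{\epsilon}+e^{x})}\bigr)$ instead of $\bigl(1_{\epsilon}-\tfrac{e^{x}}{\log(1_{\epsilon}+e^{x})}\bigr)$, and then checks the resulting inequality by again taking $g(-\infty)=0$, which is only correct for $\epsilon=0$.

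So your proposal cannot be completed as a proof of the lemma, but the obstruction is in the lemma, not in your reasoning. What does survive is concavity for $\epsilon=0$ (your argument then reduces to $\ln(1+t)\le t$, which is valid), and concavity on the restricted range $x\ge x_{0}(\epsilon)$ where $c\ln(c+e^{x})\le e^{x}$; for the very small $\epsilon$ used in the paper this covers all practically relevant SINR values $t_{l}=e^{\bar t_{l}}$, which is presumably why the issue does not surface numerically.
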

%=============================
%=============================proof
\begin{proof}
	The proof is provided in Appendix \ref{proof1}.
\end{proof}
As the objective is a sum of concave functions, it is also jointly concave. Hence, we have shown that \eqref{eq:opt_problem1} is a convex problem and it follows that every stationary point is also a global optimum solution. Note that solving the optimization problem for the DL case follows the same steps as UL, hence the detailed derivation is omitted to avoid repetition. For the DL data transmission, we use \eqref{eq:opt_problem1DL} and apply a change of variables similar to the UL case. This provides the equivalent convex problem provided in \eqref{eq:opt_problemDL_re2_fin}.
\begin{figure*}
%=============================eq(32)
\begin{equation}\label{eq:opt_problemDL_re2_fin}
\begin{aligned}
& \underset{\{\bar{t}_l\},\{\bar{\eta}_{lk}\}}{\mathrm{maximize}}
& &  \sum_{l=1}^L \log\left(\log_2\left(1_{\epsilon}+e^{\bar{t}_l}\right)\right)   \\
& {\mathrm{subject~to}}
& & \sum_{k = 1}^{K}\bar{\eta}_{lk} \leq 0,~\forall~l, \\
& & &\log\left( \sum\limits_{l'=1}^{L}\sum\limits_{k'=1}^{K} b^{l'k'}_{lk} e^{\bar{\eta}_{l'k'}+\bar{t}_l-\bar{\eta}_{lk}} +\sum\limits_{l'\in \mathcal{P}_{l}\backslash \{ l \}}^{}c^{l'}_{lk}e^{\bar{\eta}_{l'k}+\bar{t}_l-\bar{\eta}_{lk}}+ d_{lk}e^{\bar{t}_l-\bar{\eta}_{lk}}\right)\leq \log\left(a_{lk}\right), \forall~l,k.\\
\end{aligned}
\end{equation}
%=============================
\hrulefill
\end{figure*}
We showed that both \eqref{eq:opt_problem_re2_fin} and \eqref{eq:opt_problemDL_re2_fin} are convex. Therefore any algorithm that converges to a stationary point can be applied to solve these problems. In the simulation part, we use the interior point algorithm within the \texttt{fmincon} solver in MATLAB. The computational complexity of this approach is of the order of $\mathcal{O} \left(\max\{(KL+L)^3, (KL+L)^2(2KL),F \}\right)$, where $F$ is the cost of calculating the first and second derivatives of the objective and constraint functions \cite{Boyd}. Note that the complexity is calculated as the number of operations required in each iteration of the interior-point method.

\subsection{Per-cell MMF approximate solution}
Depending on the computational resources, the proposed GM per-cell MMF power control algorithm can either be implemented in real-time or used to benchmark and design heuristic power control solutions. The authors in \cite{redbook} proposed an approximate solution for the per-cell max-min power control problem with uncorrelated Rayleigh fading. In this subsection, we generalize it to correlated Rayleigh fading channels and the other cases mentioned in Section \ref{otherChannelModels}.  
For the UL data transmission, we first ignore the coherent interference part (i.e., $c^{l'}_{lk}$) from the denominator of \eqref{eq:SINR_general} and calculate the data power coefficients as
%=============================eq(36)
\begin{equation}
\begin{aligned}
\hat{\eta}_{lk} = \frac{\underset{k'}{\mathrm{min}}\{a_{lk'}\}}{a_{lk}}, \forall l,k.
\end{aligned}
\end{equation}
%=============================
We use the power coefficients $\hat{\eta}_{lk}$ to calculate the exact SINRs using  \eqref{eq:SINR_general} and denote it as $\widehat{\mathrm{SINR}}^{\rm ul}_{lk}$. Then following the same approach as in \cite{redbook}, the approximate SINR of user $k$ in cell $l$ can be defined as
%=============================eq(37)
\begin{equation}
\begin{aligned}
{\mathrm{SINR}}^{\rm ul}_{lk} = \underset{k'}\min \left\{\frac{\widehat{\mathrm{SINR}}^{\rm ul}_{lk'}}{\hat{\eta}_{lk'}}\right\}.
\end{aligned}
\end{equation}
%=============================
The power coefficients of DL data transmission of user $k$ in cell $l$ when we neglect the coherent interference is defined as \cite{redbook}
%=============================eq(38)
\begin{equation}
\begin{aligned}
\hat{\eta}_{lk} = \frac{d_{lk}+\sum\limits_{l' = 1}^{L} b^{l'}_{lk}}{a_{lk} \sum\limits_{k' = 1}^{K} \frac{d_{lk}+\sum\limits_{l' = 1}^{L} b^{l'}_{lk'}}{a_{lk'}}}.
\end{aligned}
\end{equation}
%=============================
Note that we write $b^{l'}_{lk}= b^{l'k'}_{lk}$ due to the fact that in the DL $\{b^{l'k'}_{lk}\}$ do not depend on $k'$. Similar to the UL, first, we use the power coefficients $\hat{\eta}_{lk}$ to calculate the exact DL SINRs by utilizing  \eqref{eq:SINR_general} which is denoted as $\widehat{\mathrm{SINR}}^{\rm dl}_{lk}$. Then, the per-cell MMF approximate SINR of user $k$ in cell $l$ for DL is defined as \cite{redbook}
%=============================eq(39) 
\begin{equation}
\begin{aligned}
{\mathrm{SINR}}^{\rm dl}_{lk} = \frac{1}{\sum\limits_{k' = 1}^{K} \frac{\hat{\eta}_{lk'}}{\widehat{\mathrm{SINR}}^{\rm dl}_{lk'}}}.
\end{aligned}
\end{equation}
%=============================
We define and summarize the corresponding coefficients for both UL and DL data transmission for uncorrelated and correlated channel model in Table \ref{coeff}.
\begin{figure*}[t]
	%======================================eq(37)
	\begin{equation}\label{eq:general_fading}
	\begin{aligned}
	\mathrm{SINR}^{\rm ul}_{lk} \left(\left\{\eta_{lk}\right\}\right)  = \frac{\eta_{lk}\rho_{\rm ul} \left|\mathbb{E}\left\{ \mathbf{v}^{\rm H}_{lk} \mathbf{h}^{l}_{lk}\right\}\right|^2}{\sum\limits_{l' = 1}^{L}\sum\limits_{k' = 1}^{K} \eta_{l'k'} \rho_{\rm ul} \mathbb{E}\left\{\left|\mathbf{v}^{\rm H}_{lk}\mathbf{h}^{l}_{l'k'}\right|^2\right\} - \eta_{lk}\rho_{\rm ul}\left|\mathbb{E}\left\{\mathbf{v}^{\rm H}_{lk}\mathbf{h}^{l}_{lk}\right\}\right|^2 + \sigma^2\mathbb{E}\left\{\left\lVert \mathbf{v}_{lk}\right\lVert^2 \right\}}.
	\end{aligned}
	\end{equation}
	%=======================================
	%=======================================eq(38)
	\begin{equation}
	\begin{aligned}\label{eq:general_fading_re}
	\mathrm{SINR}^{\rm ul}_{lk} \left(\left\{\eta_{lk}\right\}\right) = \frac{\eta_{lk}\rho_{\rm ul} \left|\mathbb{E}\left\{ \left(\hat{\mathbf{h}}^{l}_{lk}\right)^{\rm H} \mathbf{h}^{l}_{lk}\right\}\right|^2}{\sum\limits_{l' = 1}^{L}\sum\limits_{k' = 1}^{K} \eta_{l'k'} \rho_{\rm ul} \mathrm{var}\left\{\left(\hat{\mathbf{h}}^{l}_{lk}\right)^{\rm H}\mathbf{h}^{l}_{l'k'}\right\}+ \sum\limits_{l'\in \mathcal{P}_{l}\backslash \{l \}} \eta_{lk}\rho_{\rm ul}\left|\mathbb{E}\left\{\left(\hat{\mathbf{h}}^{l}_{lk}\right)^{\rm H}\mathbf{h}^{l}_{l'k}\right\}\right|^2 + \sigma^2\mathbb{E}\left\{\left\lVert \hat{\mathbf{h}}_{lk}\right\lVert^2 \right\}}.
	\end{aligned}
	\end{equation}
	%=======================================
	%=======================================eq(39)
	\begin{equation}
	\begin{aligned}\label{eq:dl_general_1}
	\mathrm{SINR}^{\rm dl}_{lk} \left(\left\{\eta_{lk}\right\}\right) = \frac{\eta_{lk}\rho_{\rm dl} \left|\mathbb{E}\left\{ \mathbf{w}^{\rm H}_{lk} \mathbf{h}^{l}_{lk}\right\}\right|^2}{\sum\limits_{l' = 1}^{L}\sum\limits_{k' = 1}^{K} \eta_{l'k'} \rho_{\rm dl} \mathbb{E}\left\{\left|\mathbf{w}^{\rm H}_{l'k'}\mathbf{h}^{l'}_{lk}\right|^2\right\} - \eta_{lk}\rho_{\rm dl}\left|\mathbb{E}\left\{\mathbf{w}^{\rm H}_{lk}\mathbf{h}^{l}_{lk}\right\}\right|^2 + \sigma^2}.
	\end{aligned}
	\end{equation}
	%=======================================
	%=======================================eq(40)
	\begin{equation}\label{eq:dl_general_1_re}
	\begin{aligned}
	\mathrm{SINR}^{\rm dl}_{lk} \left(\left\{\eta_{lk}\right\}\right) = \frac{\eta_{lk}\rho_{\rm dl} \left|\mathbb{E}\left\{ \frac{\left(\hat{\mathbf{h}}^{l}_{lk}\right)^{\rm H}\mathbf{h}^{l}_{lk}}{\sqrt{\mathbb{E}\left\{\lVert\hat{\mathbf{h}}^{l }_{lk}\lVert^2\right\}}} \right\}\right|^2}{\sum\limits_{l' = 1}^{L}\sum\limits_{k' = 1}^{K} \eta_{l'k'} \rho_{\rm dl} \mathrm{var}\left\{\frac{\left(\hat{\mathbf{h}}^{l'}_{l'k'}\right)^{\rm H}\mathbf{h}^{l'}_{lk }}{\sqrt{\mathbb{E}\left\{\lVert\hat{\mathbf{h}}^{l'}_{l'k'}\lVert^2\right\}}}\right\}+ \sum\limits_{ l'\in \mathcal{P}_{l}\backslash \{l \}} \eta_{lk}\rho_{\rm dl}\left|\mathbb{E}\left\{\frac{\left(\hat{\mathbf{h}}^{l'}_{l'k}\right)^{\rm H}\mathbf{h}^{l'}_{lk}}{\sqrt{\mathbb{E}\left\{\lVert\hat{\mathbf{h}}^{l'}_{l'k}{\lVert}^2\right\}}}\right\}\right|^2 + \sigma^2}.
	\end{aligned}
	\end{equation}
	%=======================================
	\hrulefill
\end{figure*}
\subsection{Solution approach for NW-MMF and NW-PF}
To solve the NW-MMF optimization problems given in \eqref{eq:opt_problem2} and \eqref{eq:opt_problem2DL} one can write these problems on epigraph form and solve linear feasibility optimization problems using the bisection algorithm. The details of the bisection algorithm can be found in \cite[Ch.~7]{bjornson2017massive}. Note that solving the linear feasibility optimization problems using the bisection algorithm is sensitive to the optimality criteria of bisection algorithm. Therefore, to guarantee the actual MMF optimal solution one should select the minimum SE among all users as the optimal solution for MMF problem.

The NW-PF problem for both UL and DL data transmission, given in \eqref{eq:opt_problem3} and \eqref{eq:opt_problem3DL}, are geometric programming problems \cite{boyd2007tutorial,gpPowercont}. The detailed proof is provided in \cite[Th.~7.2]{bjornson2017massive}. These optimization problems can be solved efficiently by using standard convex optimization solvers, for example, we used CVX \cite{grant2008cvx} in the simulation part.

\section{Other channel models}
\label{otherChannelModels}
In this part, we introduce other channel models for which the SINR expression follows the same structure as \eqref{eq:SINR_general}. The main intention is to show that the solutions provided for the power control problems in the previous sections are not limited to correlated/uncorrelated Rayleigh fading channel models. The problem formulations and the proposed solutions are covering more general cases even though we just consider correlated and uncorrelated Rayleigh fading channels for the numerical evaluation, for brevity.

Assuming the same block fading model but a general channel fading model and an arbitrary linear detection vector $\mathbf{v}_{lk} \in \mathbb{C}^M$, the UL SINR expression is given in \eqref{eq:general_fading} \cite[Th.~4.4]{bjornson2017massive}. We can rewrite \eqref{eq:general_fading} as given in \eqref{eq:general_fading_re} for MR processing at the BSs i.e., $\mathbf{v}_{lk} =  \hat{\mathbf{h}}^{l}_{lk}$.
For this case, the corresponding parameters to be used in the general SINR expression \eqref{eq:general_fading} are defined as 
\begin{equation*}
\begin{aligned}
a_{lk} &= \rho_{\rm ul} \left|\mathbb{E}\left\{ \left(\hat{\mathbf{h}}^{l}_{lk}\right)^{\rm H} \mathbf{h}^{l}_{lk}\right\}\right|^2,\\
b^{l'k'}_{lk} &= \rho_{\rm ul} \mathrm{var}\left\{\left(\hat{\mathbf{h}}^{l}_{lk}\right)^{\rm H}\mathbf{h}^{l}_{l'k'}\right\},\\
c^{l'}_{lk} &= \rho_{\rm ul}\left|\mathbb{E}\left\{\left(\hat{\mathbf{h}}^{l}_{lk}\right)^{\rm H}\mathbf{h}^{l}_{l'k}\right\}\right|^2,\\
d_{lk} &= \sigma^2\mathbb{E}\left\{\left\lVert \hat{\mathbf{h}}_{lk}\right\lVert^2 \right\}.
\end{aligned}
\end{equation*}
In the same case, the DL data transmission with arbitrary linear detection vectors $\mathbf{w}_{lk} \in \mathbb{C}^M$ leads to the general SINR expression that is provided in \eqref{eq:dl_general_1}. For the case of MR processing at the BSs with arbitrary zero mean channel fading model, we define $\mathbf{w}_{lk} = \frac{\hat{\mathbf{h}}^{l}_{l,k}}{\sqrt{\mathbb{E}\left\{\lVert\hat{\mathbf{h}}^{l}_{lk}\lVert^2\right\}}}$ and plug it into \eqref{eq:dl_general_1}. It gives us the SINR expression which has same structure as \eqref{eq:SINR_general} which is given in \eqref{eq:dl_general_1_re}. Similar to the UL case, we define the corresponding parameters for the general SINR expression as follows
\begin{equation}
\begin{aligned}
a_{lk} &= \rho_{\rm dl} \left|\mathbb{E}\left\{ \frac{\left(\hat{\mathbf{h}}^{l}_{lk}\right)^{\rm H}\mathbf{h}^{l}_{lk}}{\sqrt{\mathbb{E}\left\{\lVert\hat{\mathbf{h}}^{l }_{lk}\lVert^2\right\}}} \right\}\right|^2,\\
b^{l'k'}_{lk} &= \rho_{\rm dl} \mathrm{var}\left\{\frac{\left(\hat{\mathbf{h}}^{l'}_{l'k'}\right)^{\rm H}\mathbf{h}^{l'}_{lk }}{\sqrt{\mathbb{E}\left\{\lVert\hat{\mathbf{h}}^{l'}_{l'k'}\lVert^2\right\}}}\right\}, \\
c^{l'}_{lk} &= \rho_{\rm dl}\left|\mathbb{E}\left\{\frac{\left(\hat{\mathbf{h}}^{l'}_{l'k}\right)^{\rm H}\mathbf{h}^{l'}_{lk}}{\sqrt{\mathbb{E}\left\{\lVert\hat{\mathbf{h}}^{l'}_{l'k}{\lVert}^2\right\}}}\right\}\right|^2,~~~ d_{lk} = \sigma^2.
\end{aligned}
\end{equation} 
Note that these expressions for general channel fading model can be applied for Rician and Nakagami fading channels, and many other examples. 
Another example is the case of line-of-sight communication in which we assume an arbitrary deterministic channel vector denoted as $\bar{\mathbf{h}}^{l}_{lk}$. In addition, it is assumed that each BS utilizes an arbitrary precoder/combiner. 

For the case of UL data transmission $\mathbf{v}_{lk}$ indicates the combining vector. The SINR expression is given in \eqref{eq:ul_los} provided at the top of this page.
\begin{figure*}
%=======================================eq(42)
\begin{equation}
\begin{aligned}\label{eq:ul_los}
\mathrm{SINR}^{\rm ul}_{lk} \left(\left\{\eta_{lk}\right\}\right) = \frac{\eta_{lk}\rho_{\rm ul} \left|\mathbf{v}^{\rm H}_{lk} \bar{\mathbf{h}}^{l}_{lk}\right|^2}{\sum\limits_{\substack{k' = 1\\ k'\neq k}}^{K} \eta_{lk'} \rho_{\rm ul} \left|\mathbf{v}^{\rm H}_{lk}\bar{\mathbf{h}}^{l}_{lk'}\right|^2 + \sum\limits_{\substack{l'= 1\\ l'\neq l}}^{L}\sum\limits_{k' = 1}^{K} \eta_{l'k'} \rho_{\rm ul} \left|\mathbf{v}^{\rm H}_{lk}\bar{\mathbf{h}}^{l}_{l'k'}\right|^2 + \sigma^2\left\lVert \mathbf{v}_{lk}\right\lVert^2 }.
\end{aligned}
\end{equation}
%=======================================
\hrulefill
%=======================================eq(43) 
\begin{equation}\label{eq:dl_los}
\begin{aligned}
\mathrm{SINR}^{\rm dl}_{lk} \left(\left\{\eta_{lk}\right\}\right) = \frac{\eta_{lk}\rho_{\rm dl} \left| \mathbf{w}^{\rm H}_{lk} \bar{\mathbf{h}}^{l}_{lk}\right|^2}{\sum\limits_{\substack{k' = 1 \\ k'\neq k}}^{K} \eta_{lk} \rho_{\rm dl} \left|\mathbf{w}^{\rm H}_{l'k'}\bar{\mathbf{h}}^{l}_{lk}\right|^2 + \sum\limits_{\substack{l'= 1\\ l'\neq l}}^{L}\sum\limits_{k'= 1}^{K} \eta_{l'k'} \rho_{\rm dl} \left|\mathbf{w}^{\rm H}_{l'k'}\bar{\mathbf{h}}^{l'}_{lk}\right|^2 + \sigma^2}.
\end{aligned}
\end{equation}
%=======================================
\hrulefill
\end{figure*}
In the DL data transmission, $\mathbf{w}_{lk}$ is the precoder vector of BS $l$ to user $k$ and the SINR is defined as it is written in \eqref{eq:dl_los} which is provided in the top of this page.
The SINR expression for both UL and DL communication with line-of-sight channel model provided in \eqref{eq:ul_los} and \eqref{eq:dl_los}, respectively, follow the same structure as \eqref{eq:SINR_general}. In which for the UL, we have 
\begin{equation*}
\begin{aligned}
a_{lk} &= \rho_{\rm ul} \left|\mathbf{v}^{\rm H}_{lk} \bar{\mathbf{h}}^{l}_{lk}\right|^2,\\
b^{l'k'}_{lk} &= \begin{cases}
\rho_{\rm ul} \left|\mathbf{v}^{\rm H}_{lk}\bar{\mathbf{h}}^{l}_{l'k'}\right|^2,& (l',k') \neq (l,k)\\
0,\quad &(l',k') = (l,k),
\end{cases}\\
c^{l'k'}_{lk} & = 0,\quad\quad \quad\quad \quad\quad d_{lk} = \sigma^2\left\lVert \mathbf{v}_{lk}\right\lVert^2.
\end{aligned}
\end{equation*}

In addition, for the DL data transmission case, the parameters are 
\begin{equation*}
\begin{aligned}
a_{lk} &= \rho_{\rm dl} \left| \mathbf{w}^{\rm H}_{lk} \bar{\mathbf{h}}^{l}_{lk}\right|^2,\\
b^{l'k'}_{lk} &= \begin{cases}
\rho_{\rm dl} \left|\mathbf{w}^{\rm H}_{l'k'}\bar{\mathbf{h}}^{l'}_{lk}\right|^2, & (l',k') \neq (l,k)\\
0,\quad &(l',k') = (l,k),
\end{cases}\\
c^{l'k'}_{lk} &= 0,~\quad\quad\quad\quad\quad\quad d_{lk} = \sigma^2.
\end{aligned}
\end{equation*}
Therefore, we can apply our proposed power control scheme for this case as well.

\section{Numerical Analysis}
\label{result}
In this section, we provide a numerical comparison of the four power control algorithms provided in Section \ref{probelms}. We consider a multi-cell massive MIMO setup consisting of $16$ cells. The network uses wrap-around to avoid edge effects. We assume a square grid layout where each square has a BS in the center and all of the BSs are located in a $1$\,km$^2$ area. Furthermore, each BS serves $K = 5$ users that are randomly distributed with uniform distribution in the coverage area of the BS. In addition, it is assumed that each BS is equipped with $M = 100$ antennas and $\epsilon = 0.001$. 
In addition, to investigate the effect of pilot contamination on the performance of power control algorithms, we consider three different pilot reuse factors $f=\{1,2,4\}$ in the numerical analysis.
The bandwidth is $20\,$MHz and each coherence block contains $\tau_c = 200$ symbols. The large-scale fading coefficients are modeled as \cite{bjornson2017massive}
 %============================eq(44)
\begin{equation}
\beta^{l}_{l'k} \left[{\rm dB}\right] = -35 - 36.7\log_{10}\left(d^{l}_{l'k}/1\,\rm{m}\right) +F^{l}_{l'k},
\end{equation}
%=============================
where $d^{l}_{l'k} $ is the distance between user $k$ located in cell $l'$ to BS $l$. In addition, $F^{l}_{l'k}$ is shadow fading generated from a log-normal distribution with standard deviation of 7$\,{\rm dB}\,$.

Note that in the implementation, whenever needed, we regenerated the shadow fading realization to guarantee that the home BS has the largest large-scale fading towards its serving user. It means that $\beta^{l}_{lk}$ is the largest among all $\beta^{l}_{l'k},~~l'  = 1,\dots,L$. The noise variance is set to~$-94\,$dBm and the maximum transmit power of the users is~$200\,$mW for UL data transmission (Note that with the current maximum power settings for UL data transmission the median SNRs for the
cell-edge users is relatively low: around $-3\,$dB). The maximum transmit power of the BS is selected to be $40\,$W. The results are qualitatively the same if we change the power budgets, as is shown in Appendix \ref{budgetVSrate}. The simulations consider $1000$ realizations, where the users are dropped uniformly and randomly in each cell.  

It is assumed that each BS has a horizontal uniform linear array with half-wavelength antenna spacing. Furthermore, the spatial correlation matrix from user $k$ in cell $l$ to the BS $l'$ is modeled using the approximate Gaussian local scattering model provide in \cite[Ch.2.6]{bjornson2017massive} 
 %============================eq(45)
\begin{equation}
\left[\mathbf{R}^{l}_{l'k}\right]_{m,n}  =\beta^{l}_{l'k} e^{\pi j (m-n)\sin (\varphi^{l}_{l'k})} e^{- \frac{\sigma^2_{\varphi}}{2} \pi (m-n)\cos (\varphi)},
\end{equation}
%=============================
where $\varphi^{l}_{l'k}$ is the nominal angle of arrival (AoA) from user $k$ in cell $l'$ to the BS $l$. It is also assumed that the multipath components of a cluster have Gaussian distributed AoA around nominal AoA with a standard deviation of $10^{\textdegree}$ degree in the simulations.

Figs.\ref{fig:corr_ul_f1}-\ref{fig:corr_ul_f4}, plot the cumulative distribution function (CDF) of the SE of all the users for UL data transmission with pilot reuse factor $1$, $2$ and $4$, respectively. In these figures, it can be seen that NW-PF offers higher SE than the proposed GM of per-cell MMF SE for most users but not in the lower tail which is the important part for delivering fairness and uniform performance. 
\begin{figure*}[t]
\centering
\begin{subfigure}{0.32\linewidth}
	\includegraphics[width=\linewidth]{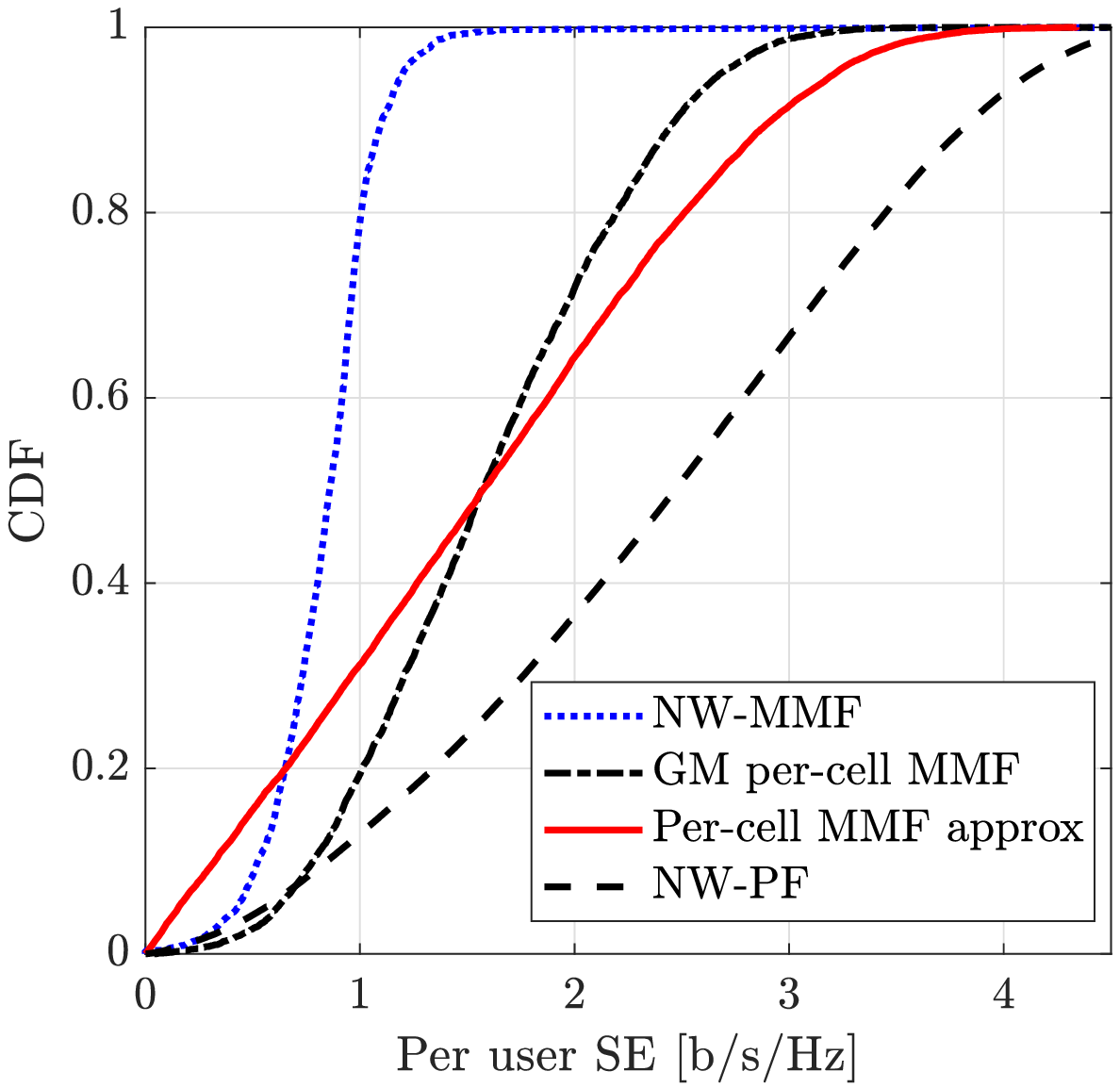}
	\caption{$f=1$.}
	\label{fig:corr_ul_f1}
 \end{subfigure}
 \begin{subfigure}{0.32\linewidth}
	\includegraphics[width=\linewidth]{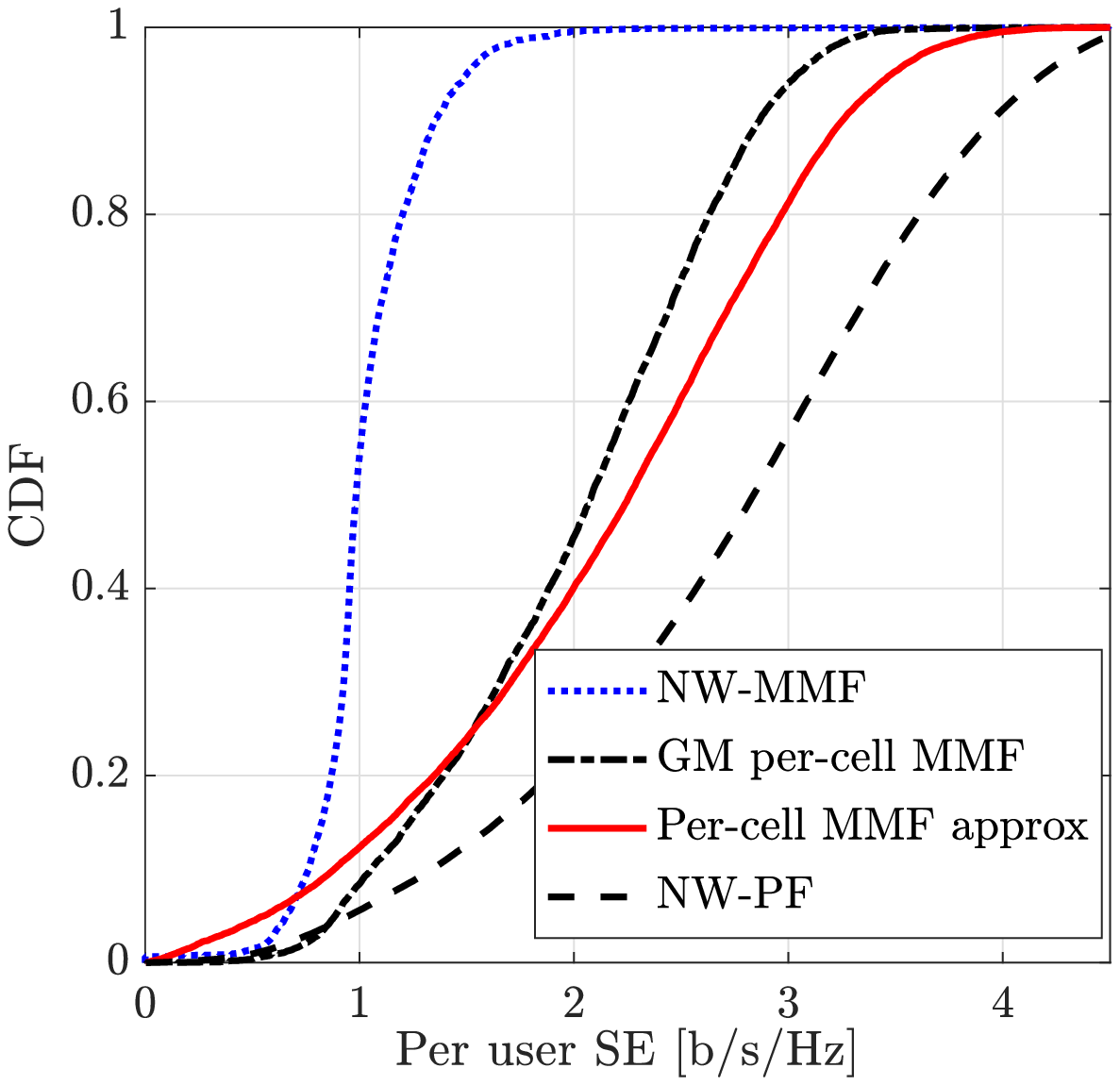}
	\caption{$f=2$.}
	\label{fig:corr_ul_f2}
 \end{subfigure}
  \begin{subfigure}{0.32\linewidth}
	\includegraphics[width=\linewidth]{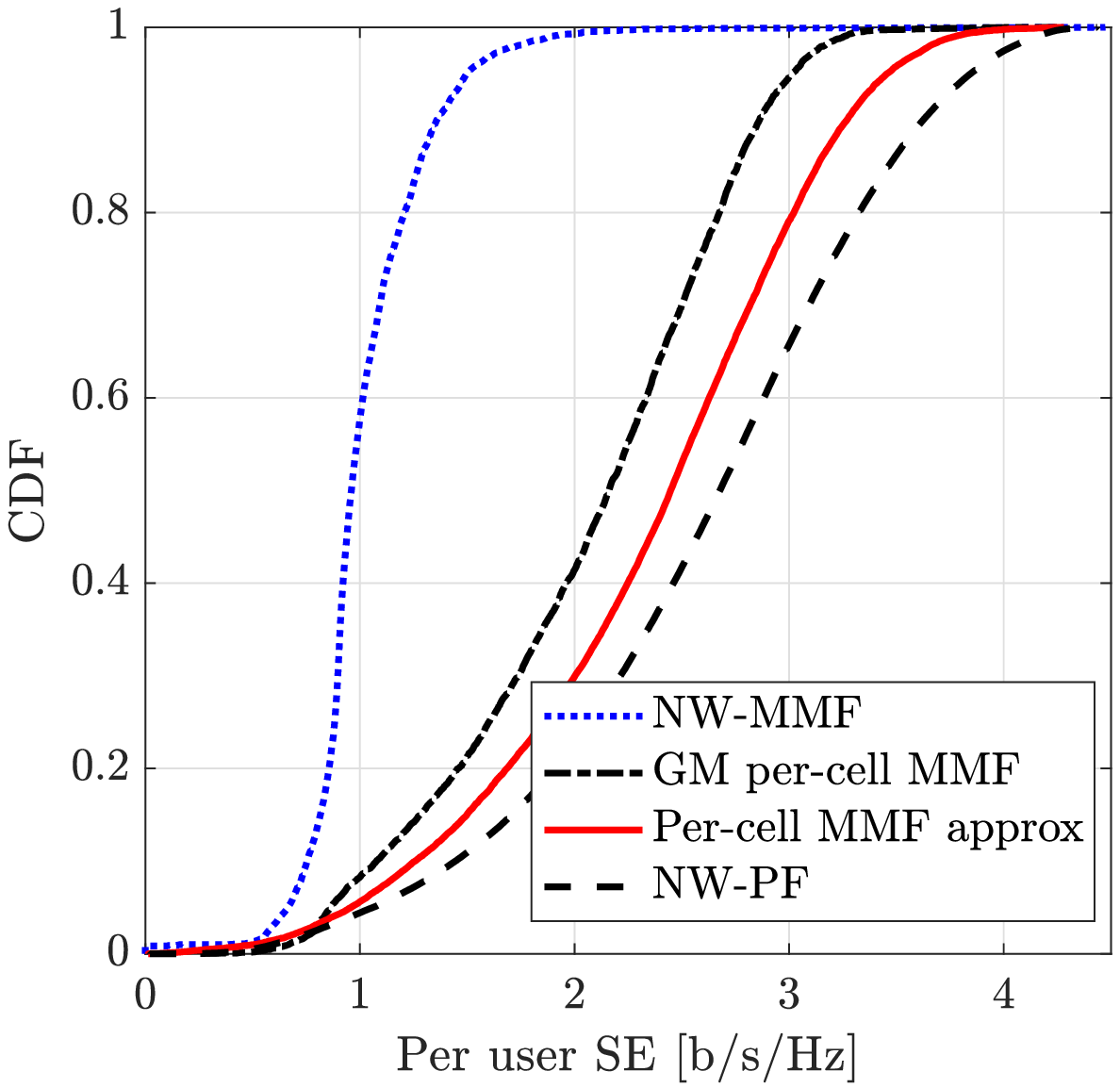}
	\caption{$f=4$.}
	\label{fig:corr_ul_f4}
\end{subfigure}
	\caption{SE of CU $k$ for UL data transmission with correlated Rayleigh fading channel with different pilot reuse factors}
\label{fig:corr_ul}
\end{figure*}

In the case of pilot reuse factor $1$, $7\%$ weakest users get higher SE when using the proposed GM per-cell MMF. For the reuse factor $2$ and $4$, less than $5\%$ and $2\%$ of the weakest users get higher SE when using the proposed GM per-cell MMF SE, respectively. Fig. \ref{fig:95likelyULbar} shows a bar diagram of the 98\%-likely per user SE to clarify the mentioned observations. In addition, our proposed algorithm mostly provides  higher SE in comparison with per-cell MMF approximation for pilot reuse factor $1$ and $2$. However, for pilot reuse factor $4$, in which we have a lower pilot contamination effect, the approximate solution has higher per user SE. The heuristic approach verifies that the approximate solution is close to the optimal solution provided by our proposed method and it can be used as a low complexity approximate alternative solution.

\begin{figure}[htb!]
	\centering
	%\vspace{0.cm}
	\includegraphics[width=.8\columnwidth]{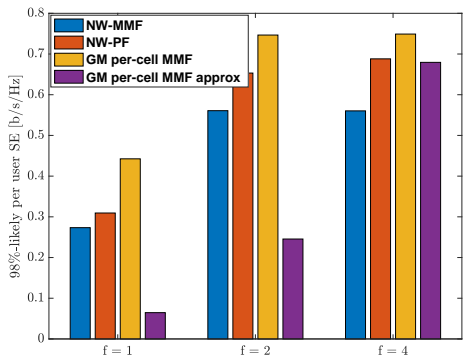}
	\caption{98\%-likely SE of CU $k$ for UL data transmission with correlated Rayleigh fading channel with different pilot reuse factors.}
	\label{fig:95likelyULbar}
\end{figure}

In Figs.\ref{fig:corr_dl_f1}-\ref{fig:corr_dl_f4}, we provide the CDF of the SE of all the users for DL data transmission with pilot reuse factors $1$, $2$ and $4$, respectively. As it can be seen from these figures, similar to the UL, the proposed GM of per-cell MMF SE provides higher SE for lower tail users and rest of the users are happier with NW-PF. 
\begin{figure*}[t]
	\centering
	\begin{subfigure}[b]{0.3\linewidth}
	\includegraphics[width=\linewidth]{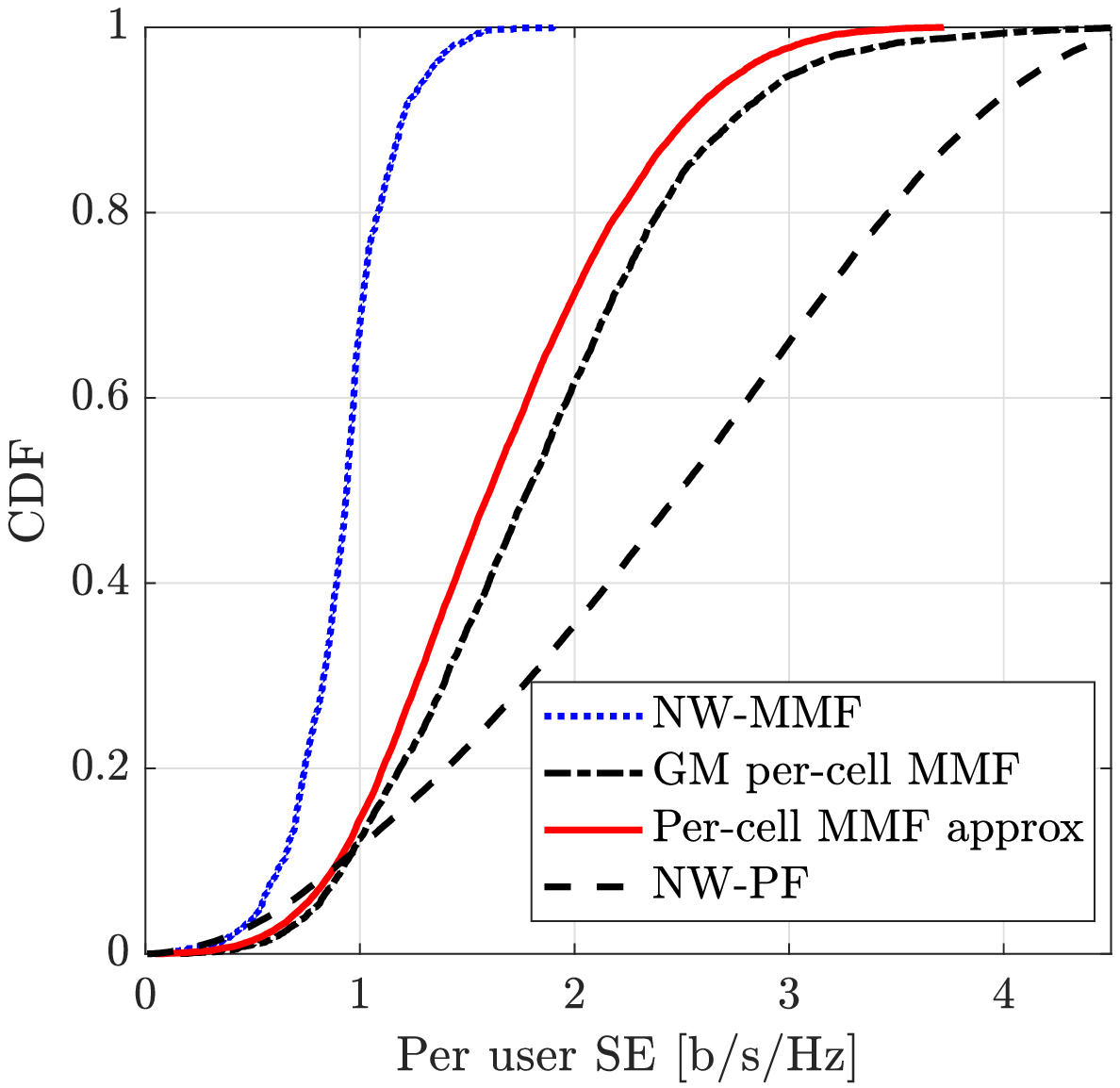}
	\caption{$f = 1$}
	\label{fig:corr_dl_f1}
     \end{subfigure}
 \begin{subfigure}[b]{0.3\linewidth}
	\includegraphics[width=\linewidth]{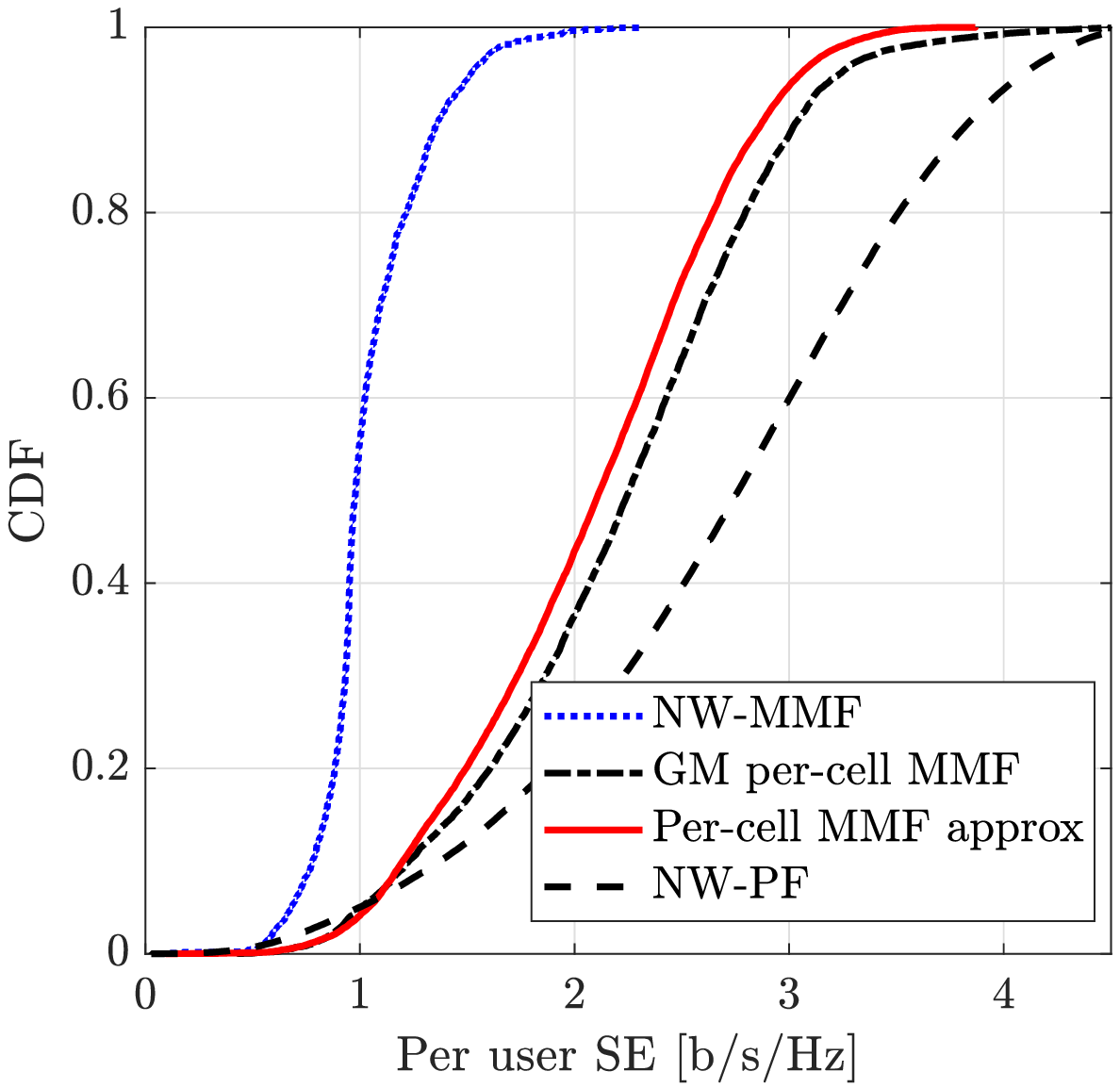}
	\caption{$f = 2$.}
	\label{fig:corr_dl_f2}
\end{subfigure}
\begin{subfigure}[b]{0.3\linewidth}
	\includegraphics[width=\linewidth]{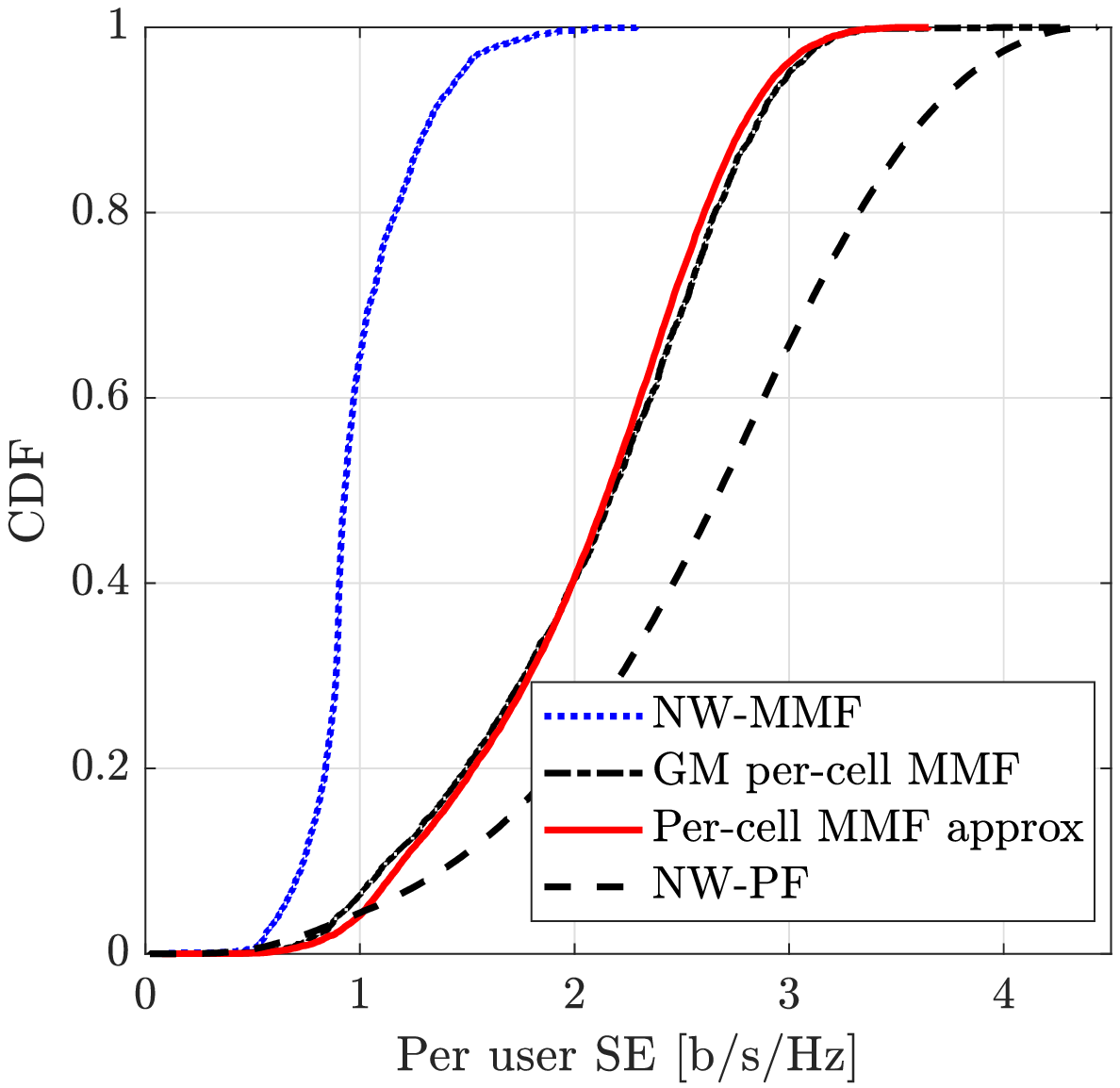}
	\caption{$f = 4$.}
	\label{fig:corr_dl_f4}
\end{subfigure}
\caption{SE of CU $k$ for DL data transmission with correlated Rayleigh fading channel with different pilot reuse factors.}
\label{fig:corr_dl}
\end{figure*}

In the case of reuse factor $1$, the $10\%$ weakest users get higher SE when using the proposed GM per-cell MMF SE. For the reuse factor $2$ and $4$, the $5\%$ and $2\%$ of weakest users get higher SE when using the proposed GM per-cell MMF SE, respectively. Fig. \ref{fig:98likelyDLbar} shows a bar diagram of the 98\%-likely SE of user $k$ for different reuse factors. It verifies the fairness performance of our proposed approach.
In addition, it can be seen from the figures that per-cell MMF approximate results are similar to our proposed GM per-cell MMF algorithm but our algorithm generally provides higher SE.
\begin{figure}[htb!]
	\centering
	\includegraphics[width=.8\columnwidth]{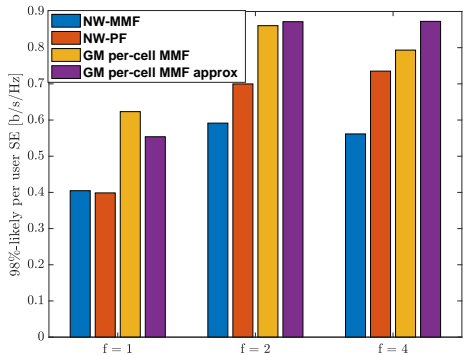}
	\caption{98\%-likely SE of CU $k$ for DL data transmission with correlated Rayleigh fading channel with different pilot reuse factors.}
	\label{fig:98likelyDLbar}
\end{figure}
{Table}~\ref{table:2}, provides the 95\%-likely sum SE for UL data transmission with correlated fading channel for pilot reuse factors $1$, $2$ and $4$. It can be seen from the results that for all cases, NW-PF scheme performs the best in terms of sum SE as it can be seen as an approximation to the sum SE maximization in the high SINR regime. In addition, per-cell MMF approximate solution provides similar results as our proposed method. In addition, we can see that the NW-MMF scheme has the lowest sum SE among all the schemes as expected.
\vspace*{0.1cm}
\begin{table}[htb!]
	\caption{
		The 95\%-likely sum SE for UL data transmission with correlated fading channel.
	}
	\centerline{
		\begin{tabular}{|c||c|c||c|}
			\hline
			\cline{2-4}
			&$f=1$ & $f=2$ & $f=4$ \\
			\hline\hline
			NW-MMF & $35.1$ & $53.9$ & $55.7$ \\
			\hline
			NW-PF &$ 175.5$& $203.9$ & $ 207.8$ \\
			\hline
			Per-cell MMF approx &$ 102.3$& $146.4$ & $ 162.2$ \\
			\hline 
			GM per-cell MMF &$106.5$& $140.6$ & $144.1$ \\
			\hline
	\end{tabular}}
	\label{table:2}
\end{table}
\vspace*{0.1cm}
In addition, we can see the 95\%-likely sum SE for DL data transmission with correlated fading channel for pilot reuse factors $1$, $2$ and $4$ in {Table}~\ref{table:3}. Similar to the UL case, the NW-PF provides the highest sum SE. In addition, our proposed GM per-cell MMF has similar performance as the approximate solution in which both of them offer higher sum SE than NW-MMF.

We observe that the results of the proposed GM per-cell MMF and the approximate solutions for UL and DL data transmission are very similar. It indicates that the approximate results are good approximates for our proposed scheme. In addition, the approximate solutions can be achieved by orders of magnitude faster than the proposed GM per-cell MMF. Therefore, the aforementioned discussions, highlight the potential of using the approximate solution as the benchmark results for comparison purposes. However, more accurate results are offered via GM per-cell MMF power control scheme at the cost of higher computational complexity.

\vspace{0.1cm}
\begin{table}[htb!]
	
%	\hrulefill
	\caption{
		The 95\%-likely sum SE for DL data transmission with correlated fading channel.
	}
	\centerline{
		\begin{tabular}{|c||c|c||c|}
			\hline
			\cline{2-4}
			&$f=1$ & $f=2$ & $f=4$ \\
			\hline\hline
			NW-MMF & $42.7$ & $52.5$ & $51.4$ \\
			\hline
			NW-PF &$178.6$& $199.1$ & $195.1$ \\
			\hline
			Per-cell MMF approx &$110.6$& $143.6$ & $146.7$ \\
			\hline
			GM per-cell MMF &$124$& $149.4$ & $146$ \\
			\hline
	\end{tabular}}
	\label{table:3}
	%\vspace{0.2in}
\end{table}

If we do a one-to-one comparison for all the users and calculate the percentage of users that get better SE with NW-MMF than with NW-PF or the proposed scheme, we get the results for the reuse factor $1$ provided in {Table}~\ref{table:1}. These numbers show that roughly one tenth of the users get higher SE, but we also see from the CDF curves that their SE gains are tiny, thus percentage values like this need to be taken with a grain of salt.

\begin{table}[htb!]
	\centering
	\caption{Percentage of users getting better SE using NW-MMF.} 
	\begin{center}
		\begin{tabular}{| c| c| c|}
			\hline
			&NW-PF& GM per-cell MMF \\ 
			\hline
			Uplink & 10\% & 12\% \\
			\hline
			Downlink & 11\% & 8\% \\
			\hline
		\end{tabular}\label{table:1}
	
	\end{center} 
\end{table}

\section{Conclusion}
In this paper, we analyzed different power control schemes that target fairness in multi-cell massive MIMO systems. 
We proposed to maximize the geometric mean of the per-cell max-min SEs. This approach is not subject to the same scalability issues as the conventional NW-MMF approach, which has received much attention in the literature. We solved the new problem formulation to global optimality and achieved better or comparable performance as the previous heuristic scheme in \cite{redbook} that also targeted to resolve the scalability issue of NW-MMF. Furthermore, our proposed approach provides more fairness towards weak users in comparison with NW-PF. The proposed solutions can be applied to many different channel models.

\appendix
\subsection{Proof of Lemma $3$}\label{proof1}

The function $\log \left(\log_2 \left( 1_{\epsilon}+ e^{x}\right)\right)$ is a concave function if and only if $f\left(x\right) = \log \left(\log \left( 1_{\epsilon}+ e^{x}\right)\right)$ is a concave function, since all logarithms are equal up to a scaling factor.

	The first derivative of $f\left(x\right)$ is
	%=========================eq(42)
	\begin{equation}
	\begin{aligned}
	f'(x) = \frac{1}{\log\left(1_{\epsilon}+e^x\right)}\frac{1}{1_{\epsilon}+e^x}e^x = \frac{e^x}{1_{\epsilon}+e^x}\dfrac{1}{\log\left(1_{\epsilon}+e^x\right)},
	\end{aligned}
	\end{equation}
	%=========================
	and the second derivative can be written as
	%=========================eq(43)
	\begin{equation}
	\begin{aligned}
	f''(x) &= \left(\frac{e^x}{1_{\epsilon}+e^x}\right)' \frac{1}{\log\left(1_{\epsilon}+e^x\right)} \\
	&+ \left(\frac{1}{\log\left(1_{\epsilon}+e^x\right)}\right)' \frac{e^x}{1_{\epsilon}+e^x}\\
	&= \frac{\left(1_{\epsilon}+e^x\right)\left(e^x\right)' - e^x\left(1_{\epsilon}+e^x\right)'}{\left(1_{\epsilon}+e^x\right)^2}\frac{1}{\log\left(1_{\epsilon}+e^x\right)} \\
	&+ \frac{-\frac{e^x}{1_{\epsilon}+e^x}}{\left(\log\left(1_{\epsilon}+e^x\right)\right)^2}\frac{e^x}{1_{\epsilon}+e^x}\\
	&= \frac{e^x + \left(e^x\right)^2 - \left(e^x\right)^2}{\left(1_{\epsilon}+e^x\right)^2 \log\left(1_{\epsilon}+e^x\right)} - \frac{\left(e^x\right)^2}{\left(1_{\epsilon}+e^x\right)^2 \left(\log\left(1_{\epsilon}+e^x\right)\right)^2}\\
	&= \left(1-\frac{e^x}{\log\left(1_{\epsilon}+e^x\right)}\right)\frac{e^x}{\left(1_{\epsilon}+e^x\right)^2\left(\log\left(1_{\epsilon}+e^x\right)\right)}.
	\end{aligned}
	\end{equation}
	%=========================
	Now we define $g\left(x\right)= e^x -\log\left(1+e^x\right)$ and we have
	%=========================eq(26)
	\begin{equation}
	\begin{aligned}
	g'\left(x\right) &= e^x - \frac{e^x}{1_{\epsilon}+e^x} = e^x \left(1-\frac{1}{1_{\epsilon}+e^x}\right) \geq 0.
	\end{aligned}
	\end{equation}
	%=========================
This means that $g\left(x\right)$ is monotonically increasing in $x$, we also have $g\left(-\infty\right) = 0$. Therefore we have shown that $g\left(x\right)\geq 0, \forall x$. 
This implies $f''\left(x\right)\leq 0, \forall x$. Therefore, we have proved that $f\left(x\right)=\log\left(\log\left(1_{\epsilon}+e^x\right)\right)$ is a concave function in $x$.

\subsection{Power Budget Effects}
\label{budgetVSrate}
Fig. \ref{fig:budgetUL} and Fig. \ref{fig:budgetDL} show bar diagrams of the 95\%-likely sum SE for different power budget for both UL and DL, respectively. The power budgets are defined as the maximum transmit power of users and the maximum transmit power of BSs for UL and DL, respectively. The results are provided for a setup consisting of 4 cells, one BS per cell, and there are $K= 2$ users per cell. All other parameters have the same values as in Section \ref{result}. The results are consistent with those in Section \ref{result}, namely that NW-PF has the highest sum SE performance for different power budgets for both UL and DL. Our proposed approach is having higher performance than NW-MMF which has the lowest sum SE. The results also show that the power budget (max power of users in the UL case) has a higher impact on the UL data transmission than on the DL data transmission. Note that these results do not elaborate on the fairness level of the power control approaches because the individual SE performance of users is hidden. Hence, one should not design the network utility only based on the sum SE results.

\vspace*{0.2cm}
\begin{figure}[htb!]
	\centering
	\includegraphics[width=.8\columnwidth]{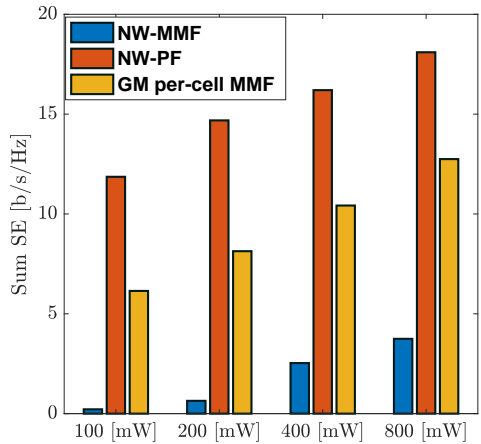}
	\caption{95\%-likely sum SE versus power budget of each user for UL data transmission with correlated Rayleigh fading channel and pilot reuse factors $f=1$.}
	\label{fig:budgetUL}
\end{figure}
\vspace*{0.2cm}
\begin{figure}[htb!]
	\centering
	\includegraphics[width=.8\columnwidth]{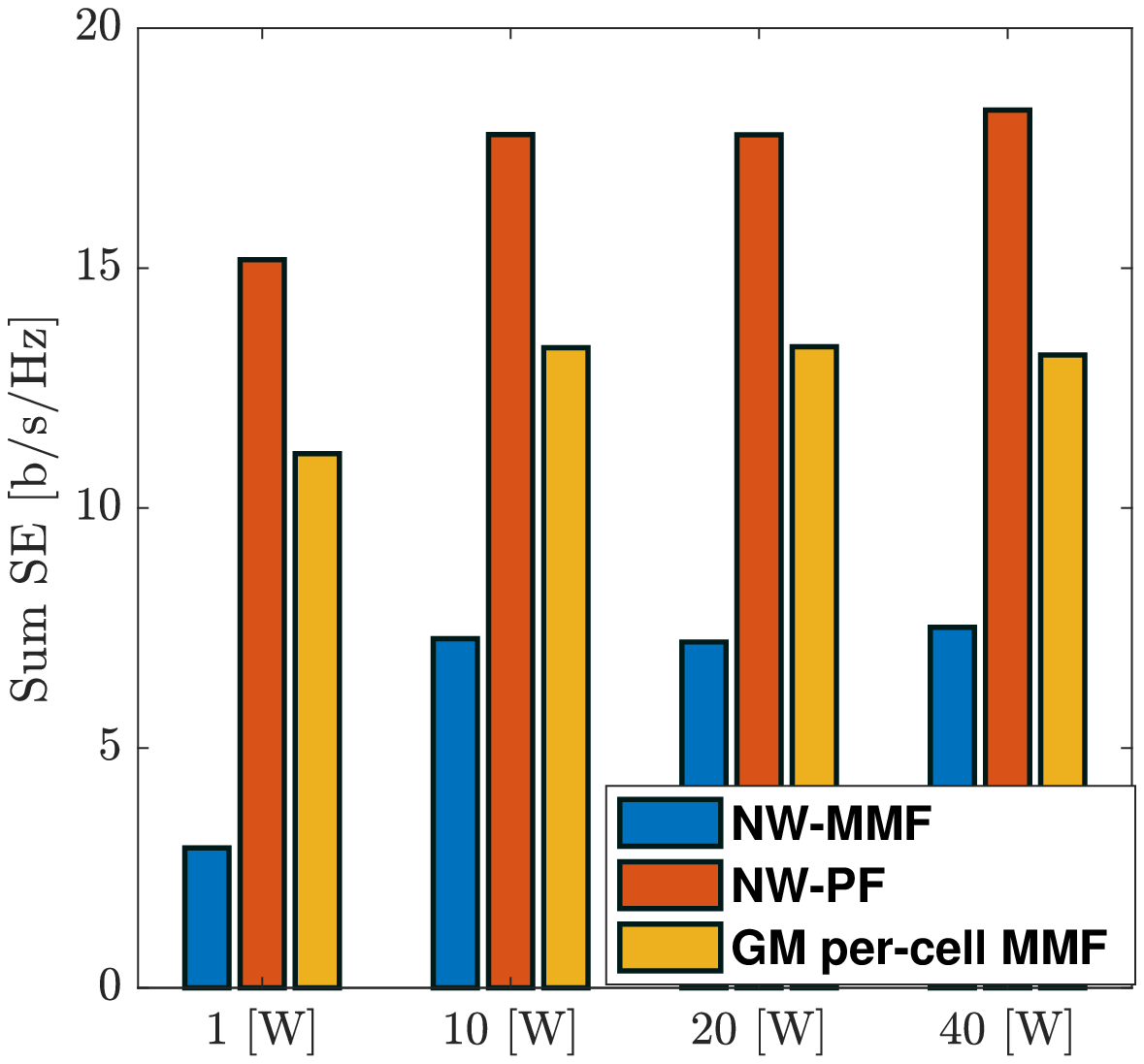}
	\caption{95\%-likely sum SE of versus power budget of each BS for DL data transmission with correlated Rayleigh fading channel and pilot reuse factors $f=1$.}
	\label{fig:budgetDL}
\end{figure}

\bibliographystyle{IEEEtran}
\bibliography{di}

\end{document}